\gdef\@copyrightpermission{
  \begin{minipage}{0.2\columnwidth}
   \href{https://creativecommons.org/licenses/by/4.0/}{\includegraphics[width=0.90\textwidth]{by}}
  \end{minipage}\hfill
  \begin{minipage}{0.8\columnwidth}
   \href{https://creativecommons.org/licenses/by/4.0/}{This work is licensed under a Creative Commons Attribution International 4.0 License.}
  \end{minipage}
  \vspace{5pt}
}
\setlist[itemize]{leftmargin=*}
\Crefname{algorithm}{Alg.}{Algs.}
\Crefname{equation}{Eq.}{Eqs.}
\Crefname{figure}{Fig.}{Figs.}
\Crefname{table}{Tbl.}{Tbls.}
\Crefname{section}{\S\!}{\S\!}
\Crefname{subsection}{\S\!}{\S\!}
\Crefname{subsubection}{\S\!}{\S\!}
\Crefname{appendix}{\S\!\!}{\S\!\!}
\Crefname{theorem}{Thm.}{Thms.}
\Crefname{lemma}{Lem.}{Lems.}
\Crefname{remark}{Rem.}{Rems.}
\Crefname{definition}{Def.}{Defs.}
\Crefname{example}{Ex.}{Exs.}
\Crefname{algorithm}{Alg.}{Algs.}
\DeclareMathOperator{\ReLU}{ReLU}
\DeclareMathOperator{\BP}{\texttt{BP}}
\DeclareMathOperator{\SAT}{\texttt{SAT}_{\texttt{MILP}}}
\newcommand{\NN}{\mathbb{N}}
\newcommand{\RR}{\mathbb{R}}
\newcommand{\QQ}{\mathbb{Q}}
\newcommand{\ZZ}{\mathbb{Z}}
\newcommand{\mDelta}{\bm{\Delta}}
\newcommand{\va}{\bm{a}}
\newcommand{\vc}{\bm{c}}
\newcommand{\vd}{\bm{d}}
\newcommand{\vh}{\bm{h}}
\newcommand{\vo}{\bm{o}}
\newcommand{\vu}{\bm{u}}
\newcommand{\vx}{\bm{x}}
\newcommand{\vy}{\bm{y}}
\newcommand{\vdelta}{\bm{\delta}}
\newcommand{\vomega}{\bm{\omega}}
\newcommand{\vell}{\bm{\ell}}
\newcommand{\sA}{\mathcal{A}}
\newcommand{\sC}{\mathcal{C}}
\newcommand{\sH}{\mathcal{H}}
\newcommand{\sI}{\mathcal{I}}
\newcommand{\sM}{\mathcal{M}}
\newcommand{\sO}{\mathcal{O}}
\newcommand{\sP}{\mathcal{P}}
\newcommand{\sV}{\mathcal{V}}
\newcommand{\sX}{\mathcal{X}}
\newcommand{\tup}[1]{\langle #1 \rangle}
\newcommand{\set}[1]{\{ #1 \}}
\newcommand{\norm}[1]{\lVert #1 \rVert}
\newcommand{\seq}[1]{( #1 )}
\newcommand{\tr}[1]{{#1}^{\intercal}}
\DeclarePairedDelimiter\floor{\lfloor}{\rfloor}
\newtheorem{theorem}{Theorem}
\newtheorem{lemma}{Lemma}
\theoremstyle{example}
\newtheorem{example}{Example}
\theoremstyle{remark}
\newtheorem{remark}{Remark}
\theoremstyle{assumption}
\newtheorem{assumption}{Assumption}
\theoremstyle{definition}
\newtheorem{definition}{Definition}
\newcommand{\celluns}{\cellcolor{red!40}}
\newcommand{\cellsat}{\cellcolor{blue!30!green!30}}
\newcommand{\Next}{\bigcirc}
\newcommand{\Until}{\,\mathcal{U}}
\newcommand{\Release}{\mathcal{R}}
\newcommand{\Always}{\square}
\newcommand{\Finally}{\lozenge}
\newcommand{\True}{\texttt{True}}
\newcommand{\False}{\texttt{False}}
\newcommand{\Unknown}{\texttt{Unknown}}
\newcommand{\LTL}{\(\text{LTL}_{\RR}\)}
\newcommand{\xmark}{\({\color{red!70!black}\mathsf{x}}\)}
\newcommand{\cmark}{{\color{green!80!blue!70!black}\checkmark}}
\newcommand{\sysname}{{MN-MAS}}
\newcommand{\fullsysname}{Memoryful Neural Multi-Agent System}
\definecolor{rose}{rgb}{1,0.33,0.44}
\title[LTL Verification of Memoryful Neural Agents]{LTL Verification of Memoryful Neural Agents}
\author{Mehran Hosseini}
\affiliation{
  \institution{King's College London}
  \city{London}
  \country{United Kingdom}}
\email{mehran.hosseini@kcl.ac.uk}
\author{Alessio Lomuscio}
\affiliation{
  \institution{Imperial College London}
  \city{London}
  \country{United Kingdom}}
\email{a.lomuscio@imperial.ac.uk}
\author{Nicola Paoletti}
\affiliation{
  \institution{King's College London}
  \city{London}
  \country{United Kingdom}}
\email{nicola.paoletti@kcl.ac.uk}
\begin{abstract}
  We present a framework for verifying Memoryful Neural Multi-Agent Systems (MN-MAS) against full Linear Temporal Logic (LTL) specifications. In MN-MAS, agents interact with a non-deterministic, partially observable environment. Examples of MN-MAS include multi-agent systems based on feed-forward and recurrent neural networks or state-space models. Different from previous approaches, we support the verification of both bounded and unbounded LTL specifications. We leverage well-established bounded model checking techniques, including lasso search and invariant synthesis, to reduce the verification problem to that of constraint solving. To solve these constraints, we develop efficient methods based on bound propagation, mixed-integer linear programming, and adaptive splitting. We evaluate the effectiveness of our algorithms in single and multi-agent environments from the Gymnasium and PettingZoo libraries, verifying unbounded specifications for the first time and improving the verification time for bounded specifications by an order of magnitude compared to the SoA.
\end{abstract}
\keywords{Formal Verification, Neural Net Verification, Recurrent
  Neural Nets, Verification of Multi-Agent Systems, Safe Reinforcement Learning}
\newcommand{\BibTeX}{\rm B\kern-.05em{\sc i\kern-.025em b}\kern-.08em\TeX}
\begin{document}


\pagestyle{fancy}
\fancyhead{}


\maketitle 


%
\section{Introduction}
\label{sec: Introduction}
With the rise of machine learning, we are increasingly witnessing applications of neural networks (NNs) to autonomous systems. While \emph{neural agents} may solve significantly more complex tasks than traditionally programmed
counterparts~\cite{Silver+17,Vinyals+19}, NNs suffer from fragilities~\cite{Szegedy+14}, which is a concern in safety-critical applications. Thus, they require rigorous guarantees, as those offered by formal verification.

In this paper, we develop the methodology for verifying systems comprising multiple memoryful neural
agents that interact with an uncertain environment over time in a
closed loop. We refer to such systems as \emph{\fullsysname}
(\emph{\sysname}).

The verification of \sysname{} is more challenging than the
verification of individual predictions made by NNs, due to the
presence of multiple networks, multiple time steps, and temporal
dependencies between predictions. Several methods for the verification
of systems with NNs ``in the loop'' have been recently
proposed~\cite{HosseiniL23,FanHCL020}.  Nevertheless, these are limited by at least two of the following: they only consider single-agent systems; are limited to the analysis of (time-)bounded reachability; do not
account for uncertainty in the environment; or the agents are
implemented by feedforward models, thus are stateless/memoryless.

We propose a verification approach for \sysname{} that addresses all the
above limitations. The approach presented here supports
verification of \sysname{} against full \emph{linear
temporal logic}~\cite{Pnueli77}. Thus, the method presented here goes beyond simple
bounded reachability analysis and, unlike existing methods, can verify both bounded and \textit{unbounded} specification, as well as nested properties. These include (unbounded) safety (e.g., \textit{will an agent always remain in a safe state region?}) or stability (e.g., \textit{will an agent reach its target and remain there long enough thereafter?}).

Although we show that the verification of unbounded specifications is
undecidable in general, we develop sound or complete procedures based
on Bounded Model Checking (BMC)~\cite{Biere+09}, which allows us to
verify unbounded properties of the \sysname{} using finite paths. In
particular, we develop three BMC procedures, based on, respectively,
simple path unrolling, lasso search, and inductive invariants.
Thus, when the presented method establishes that an \sysname{} $S$
satisfies an LTL specification $\phi$, written $S \models \phi$, then
$\phi$ must hold for all the paths of the system induced by the
uncertain (non-deterministic) environment dynamics and initial states.

With BMC, we reduce the verification problem to solving a set of
logical constraints over the \sysname{} states.  In particular, we
target linearly definable systems, i.e., systems that can be
formulated by
a finite set of piecewise-linear (PWL) constraints.  This class of
systems remains quite expressive, as it includes, among others, neural
agents with ReLU-based activations.

We introduce two efficient algorithms to solve the resulting PWL
constraints. The first is \emph{Bound Propagation through
  Time (BPT)}, an extension of interval and linear bound
propagation~\cite{Gowal+18,Wang+19} to support memoryful models, such
as those based on \emph{Recurrent Neural Networks} (\emph{RNNs}). BPT
is highly efficient but results in an over-approximation of the
model's outputs, meaning that it cannot be used in all verification
instances (see \cref{subsec: Bounded Solving}). The second is a
\emph{Mixed-Integer Linear Programming} (\emph{MILP})-based algorithm,
called \emph{Recursive MILP} (\emph{RMILP}), which allows encoding the
constraints and variables during multiple time steps in MILP. Unlike
BPT, the latter is precise (i.e., does not overapproximate) and is
more efficient than past MILP approaches for recurrent models
\cite{HosseiniL23,Akintunde+19} (see \cref{subsec: Lunar Lander}).
We also employ \emph{adaptive splitting} \cite{KouvarosL21},
which exploits the information about the reachable sets computed by
BPT to resolve \textit{a priori} some of the discrete variables of the
MILP instances, thereby considerably simplifying the MILP encoding.

In summary, we make the following contributions: \textbf{(1)} we introduce the first framework for verifying memoryful,
  neural multi-agent systems against full LTL specifications under
  uncertainty;
\textbf{(2)} we characterize the decidability of the verification problem;
\textbf{(3)} we develop efficient solutions based on bounded model checking,
  bound propagation, and constraint solving; and
\textbf{(4)} we evaluate the approach on deep Reinforcement Learning (RL)
  benchmarks with single and multiple agents, demonstrating its
  efficiency compared to the SoA (being
    strictly faster in all instances, up to one order of magnitude)
  and its versatility in verifying, for the first time, both time-bounded
  and unbounded specifications.

{\large\textbf{Related Work.}}
Following several seminal works exposing the
vulnerability of machine learning models to adversarial
perturbations~\cite{Szegedy+14,BiggioR18}, significant attention has
been paid to the verification of neural networks against input-output
relations. A special class of these consists of the verification of
local robustness including output reachability.
Most solution techniques are based on (a combination of) bound
propagation~\cite{Gowal+18,Wang+19} and
constraint solving
\cite{Ehlers17,LomuscioM17,TjengXT19,Katz+19,Botoeva+20}.

While this line of research focuses on open-loop systems, we here consider
closed-loop systems with NN components, i.e., sequential processes
characterised by multiple interdependent NN predictions over time. In
recent years, several methods have been proposed to verify this kind
of systems
\cite{HosseiniL23,AkintundeBKL20-AAMAS,AkintundeBKL20-KR,Tran+20,Sidrane+22,FanHCL020,BacciP20,Ivanov+21,Wicker+21,Wicker+23,Giacomarra+2025-Generative}. 

Among the above, the work of \citet{AkintundeBKL20-KR}
focuses on temporal logic verification of multi-agent systems but
considers a bounded fragment of the logic and agents implemented via
feedforward NNs (FFNN), i.e., memoryless models. Closer to this
contribution is the method of \citet{Akintunde+19}, which supports RNN-based agents
but is limited to single-agent systems and bounded LTL properties, and
its recent extension by \citet{HosseiniL23}, which introduces an inductive
invariant algorithm for unbounded safety. 
Our work considerably
expands on these previous proposals in that it is the first to support multi-agent systems comprising neural memoryful
agents and partially-observable uncertain environments, as well as arbitrary LTL
specifications, including unbounded and nested formulas.

Verifying recurrent models on the other hand is significantly more
challenging---even for recurrent linear models w.r.t.
time-unbounded linear specifications \cite{Hosseini2021-Thesis,HosseiniOW2019-Universal,Almagor+2018-Divergence,OuaknineW2014-Problems,Braverman2006-Rational,Tiwari2004-Real}.

To verify models with recurrent layers, one approach is to explicitly
unroll the RNN to derive an equivalent
FFNN~\cite{Akintunde+19}. However, the size of the resulting network
blows up with the unrolling depth (i.e., the time bound). This problem
is mitigated in~\cite{HosseiniL23} by introducing a (more efficient)
direct MILP encoding of the RNN constraints, but still results in
often unfeasible queries. Other related RNN verification methods, but
applied to open-loop systems, include~\cite{Ko+19}, based on linear
bound propagation, and~\cite{JacobyBK20}, which derives a
time-invariant overapproximation of the RNN.  Our solution extends the MILP encoding of \cite{HosseiniL23} to allow for multiple agents and composite \LTL{} formulae, and further enhances it with bound
propagation techniques and adaptive splitting \cite{KouvarosL21,HenriksenL20}, as we discuss in \cref{subsec: Bounded Solving}.

A related line of work concerns the verification of multi-agent
reinforcement learning (MARL) systems with neural agents \cite{MqirmiBL21,Riley+21,Yan+22-synthesis,Yan+22-equilibria}. The main
difference is that in MARL the environment is stochastic while in
\sysname it is non-deterministic.



%
\section{Background \& Problem Formulation}
\label{sec: Background}
We denote the \(n\)-dimensional vector space over \(\RR\) by
\(\RR^n\). Vectors \(\vx \in \RR^n\) are shown by bold italic typeface
while scalars \(x \in \RR\) are shown by normal italic typeface. We
denote the set of all finite sequences over a set
\(\sC \subseteq \RR^n\) by \(\sC^*\) and the set of all countable
infinite sequences over \(\sC\) by \(\sC^{\omega}\).  We use
\(\seq{\vx^{\tup{i}}}_{i = 0}^n\) and
\(\seq{\vx^{\tup{i}}}_{i \in \NN}\) to indicate a sequence in
\((\RR^n)^*\) and \((\RR^n)^{\omega}\), respectively. The \(t\)-th
element of such sequence denoted by \(\vx^{\tup{t}}\). For a function
\(f\), its \(t\) consecutive applications is denoted by
\(f^{(t)}\). For the sake of brevity, we treat sets like vectors and
scalars; e.g., for a set \(\sP \subseteq \RR^n\), we use
\(\tr{\vc} \cdot \sP\) to refer to the set
\(\set{\tr{\vc} \cdot \vx : \vx \in \sP}\).

Next, we define PWL
constraints, and then, in the rest of the section, we lay out the
problem formulation.
\begin{definition}
  \label{def: PWL}
  A \emph{Piecewise-Linear} (\emph{PWL}) \emph{constraint} is a
  constraint of the form
  \begin{equation}
  \label{eq: PWL}
    A_{\vx} \vx + A_{\vy} \vy + A_{\vdelta} \vdelta \leq \vd,
  \end{equation}
  where \(\vx \in \RR^{n_{\vx}}, \vy \in \RR^{n_{\vy}}\), and
  \(\vdelta \in \mDelta \subseteq \ZZ^{n_{\vdelta}}\) are vector
  variables (or simply variables), and
  \(A_{\vx} \in \QQ^{n_c \times n_{\vx}}, A_{\vy} \in \QQ^{n_c \times
    n_{\vy}}\), \(A_{\vdelta} \in \QQ^{n_c \times n_{\vdelta}}\), and
  $\vd \in \QQ^{n_c}$ are real-valued matrices. The variable
  \(\vdelta\) is called the \emph{discrete variable} and often
  \(\mDelta = \set{0, 1}^{n_{\vdelta}}\), i.e., \(\vdelta\) is a
  Boolean variable. Variables \(\vx\) and \(\vy\) are the \emph{free}
  and \emph{pivot} variables, respectively. In other words,
  \cref{eq: PWL} defines \(\vy\) linearly in terms of
  \(\vx\), in a piecewise fashion because of the discrete variable
  \(\vdelta\). We say that \(\vy\) as a PWL functions of \(\vx\) if
  for each \(\vy \in \RR^{n_{\vy}}\), there exists at most one
  \(\vx \in \RR^{n_{\vx}}\) that satisfies \cref{eq: PWL}.
\end{definition}
\subsection*{Problem Formulation}
We consider systems comprising a non-deterministic and partially
observable environment with multiple agents, controlled by
arbitrary PWL neural policies. Such policies can be memoryful,
e.g., RNN-based, or without memory, e.g., FFNN-based. In this paper,
we focus on the verification of specifications expressed in
\emph{Linear Temporal Logic} over the \emph{reals}, \LTL, an extension
of LTL \cite{Pnueli77} with atomic predicates involving real
variables, as defined in \cref{def: Specifications}.

\begin{definition}[\fullsysname]
  \label{def: System}
  A \emph{\fullsysname} (\emph{\sysname}) \(S\) comprises \(s\)
  \emph{agents} interacting over time with a non-deterministic
  \emph{environment}, as per following equations.
  \begin{align}
    \vx^{\tup{t}} \in & \ \tau(\vx^{\tup{t-1}}, \va_1^{\tup{t-1}},\ldots, \va_s^{\tup{t-1}}),
                        \label{eq:aes-1}\\
    \va_i^{\tup{t}} = & \ \alpha_i(\vo_i^{\tup{1}}, \ldots, \vo_i^{\tup{t}}),
                        \label{eq:aes-2}\\
    \vo_i^{\tup{t}} = & \ o_i(\vx^{\tup{t}}).
                        \label{eq:aes-3}
  \end{align}
  In \cref{eq:aes-1,eq:aes-2,eq:aes-3},
  \begin{itemize}
  \item $\vx^{\tup{t}} \in \sX$ is the \emph{state} of the environment
    at time $t$ and $\vx^{\tup{0}} \in \sX^{\tup{0}}$ is the
    \emph{initial state} of the environment, with $\sX\subseteq \RR^m$
    being the environment's \emph{state space} and
    $\sX^{\tup{0}}\subseteq \sX$ the set of \emph{initial states};
  \item $(\va_1^{\tup{t}},\ldots, \va_s^{\tup{t}}) \in \sA$ are the
    \emph{actions} performed at time $t$ by agents $1,\ldots, s$
    respectively, where $\sA=\sA_1 \times \dots \times \sA_s$ and
    $\sA_i$ is the \emph{action space} of agent $i$; and
  \item \(\tau: \sX \times \sA \rightarrow 2^{\sX}\) is the
    environment's \emph{transition relation} which, given the current
    state of the environment \(\vx \in \sX\) and agents' actions
    \(\va \in \sA\), returns the set of possible next states
    \(\tau(\vx, \va) \subseteq \sX\).
  \end{itemize}
  Moreover, for each agent $i=1,\ldots,s$,
  \begin{itemize}
  \item $\vo_i^{\tup{t}} \in \sO_i$ is agent $i$'s \emph{observation}
    of the environment at $t$, with $\sO_i \subseteq \RR^\ell$ being the
    agent's \emph{observation space}; observations are made by an
    \emph{observation function} \(o_i: \sX \rightarrow \sO_i\);
  \item \(\alpha_i: \sO_i^* \rightarrow \sA_i\) is the agent \(i\)'s
    \emph{policy}, which given a finite sequence of environment
    observations returns an admissible action from \(\sA_i\). The
    policy $\alpha_i$ is realised via a \emph{recurrent function}
    $r_i:\sO_i \times \sH_i \to \sA_i \times \sH_i$, where
    \(\sH_i \subseteq \RR^{n_i}\) is the agent's \emph{set of hidden
      states}. The function $r_i$, defined
    \begin{equation}
      \label{eq: Memoryful Policy}
      r_i(\vo_i^{\tup{t}}, \vh_i^{\tup{t}}) = (\va_i^{\tup{t}}, \vh_i^{\tup{t+1}}),
    \end{equation}
    maps observation \(\vo_i^{\tup{t}}\) and agent's hidden state
    \(\vh_i^{\tup{t}}\) to action \(\va_i^{\tup{t}}\) and updated
    state \(\vh_i^{\tup{t+1}}\).  When \(t=0\),
    \(\vh_i^{\tup{0}}\in \sH_i^{\tup{0}}\), the set of \emph{initial
      hidden states} of agent \(i\). We refer to $\vh_i$ as the
    \emph{memory} or \emph{hidden state} of the agent.
  \end{itemize}
  When the transition relation \(\tau\) is a function 
  \(\tau: \sX \times \sA \rightarrow \sX\), the environment is \emph{deterministic}. We say that agent \(i\) is
  \emph{memoryless} or \emph{feed-forward} whenever
  \(\sH_i = \emptyset\). As done for \(\sA\), we define
  \(\sO = \sO_1 \times \dots \times \sO_s\) and
  \(\sH = \sH_1 \times \dots \times \sH_s\) to be the \emph{sets of
    systems's joint observations} and \emph{hidden states},
  respectively. We denote by \(\vh \in \sH\) and \(\vo \in \sO\) a
  \emph{joint observation} and a \emph{joint hidden state},
  respectively. Consequently, we denote the \emph{initial set of joint
    agent states} by \(\sH^{\tup{0}}\). This allows us to define the
  \emph{joint observation function}
  \(o(\vx) = (o_1(\vx), \dots, o_s(\vx))\) and \emph{joint policy}
  \(r(\vo, \vh) = (\va, \vh') = (r_1(\vo_1, \vh_1), \dots, r_s(\vo_s,
  \vh_s))\). We can now define the \emph{system evolution}
  \(e(\vx, \vh) = (\tau(\vx, \va), \vh')\), which given the current joint system state \(\vx\) and joint memory \(\vh\), returns the next joint system state \(\tau(\vx, \va)\) and joint memory \(\vh'\).
\end{definition}
\begin{assumption}
  We will focus on \sysname{} that are linearly definable or can be
  linearly approximated, that is, $\tau$, $r$,
  $o$, and the sets of initial states
  $\sX^{\tup{0}}, \sH^{\tup{0}}$, can be
  expressed by finite conjunction and disjunction of linear
  inequalities. This assumption is insignificant as it
  captures NN policies with ReLU-like activations, among others. Moreover, non-linear
  systems can be linearly approximated to an arbitrary level of
  precision as described in \cite{AkitundeLMP18,DAmbrosioLM10}.
\end{assumption}
\begin{remark}
  In the experiments, we implement memoryful policies in \cref{eq:
    Memoryful Policy} using RNNs and memoryless policies using FFNNs. Our
  formulation however is more general and can accommodate other kinds
  of memoryful models, such as LSTM \cite{HochreiterS97}, memory
  networks \cite{WestonCB14}, and state space models
  \cite{GhahramaniH00}, such as Mamba \cite{GuD23}.
\end{remark}
\begin{definition}[\sysname\ path]
  A sequence $\pi =(\vx^{\tup{0}},\bm{\vh}^{\tup{0}},\vx^{\tup{1}},$
  $\bm{\vh}^{\tup{1}},\ldots) \in (\sX\times \sH)^{\omega}$ is a path
  of an \sysname\ $S$ if, for any $t\geq 0$, $\vx^{\tup{t+1}}$ and
  $\bm{\vh}^{\tup{t+1}}$ are recursively derived from $\vx^{\tup{t}}$
  and $\bm{\vh}^{\tup{t}}$ following
  \cref{eq:aes-1,eq:aes-2,eq:aes-3,eq: Memoryful Policy}. We denote the
  set of all infinite paths of $S$ by
  \(\Pi_S \subseteq (\sX\times \sH)^{\omega}\). For a path
  $\pi \in \Pi_S$ and index $t$, we denote with
  $\pi_{\sX}^{\tup{t}} \in \sX$ and $\pi_{\sH}^{\tup{t}} \in \sH$ the
  environment state and agents' hidden states, respectively, at time
  $t$ in $\pi$. For a subset of systems paths \(\sP \subseteq \Pi_S\),
  we denote its projection onto \(\sX\) by \(\sP_{\sX}\), defined as
  \(\sP_{\sX} = \set{\vx \in \sX: (\vx, \vh) \in \sP \text{ for some } \vh \in \sH}\). The projection of \(\sP\) onto \(\sH\) is denote by \(\sP_{\sH}\) and is defined similarly.
\end{definition}

We can now proceed to introduce the \LTL{} specifications.
\begin{definition}[Specifications]
  \label{def: Specifications}
  The \emph{Linear Temporal Logic} over the \emph{reals}, \LTL{}, is
  defined by following BNF.
  \begin{equation*}
  \resizebox{\linewidth}{!}{$
    \displaystyle
    \phi ::= \tr{\vc}\! \cdot \vx \leq d \,|\, \neg \phi \,|\,
    \phi \wedge \phi \,|\, \phi \vee \phi \,|\,
    \Next \phi \,|\, \phi \Until \phi \,|\,
    \phi \Release \phi \,|\, \Always \phi \,|\, \Finally \phi,
    $}
  \end{equation*}
  where \(\vc \in \RR^m\) and \(d \in \RR\) are constants. \LTL{} atoms are linear inequalities of the form
  $\tr{\vc} \cdot \vx \leq d$ involving the environment state
  $\vx$.  If a formula
  contains any of the temporal operators
  \(\Until, \Release, \Always, \Finally\), it is an
  \emph{unbounded formula}; otherwise, a \emph{bounded formula}.
\end{definition}
\begin{remark}
  \LTL{} atoms are defined over the environment's states as we are not
  interested in reasoning about the hidden states of agents, although these can be trivially included in the logic if required.
\end{remark}

The formula \(\Next \phi\) is read as ``\(\phi\) holds at the next
time step''; \(\psi \Until \phi\) is read as ``\(\psi\) holds at least
until \(\phi\) becomes true, which must hold at the current or a
future position''; \(\psi \Release \phi\) is read as ``\(\phi\) has to
be true until and including the point where \(\psi\) first becomes
true; if \(\psi\) never becomes true, \(\phi\) must remain true
forever''; \(\Always \phi\) is read as ``\(\phi\) always holds''; and
\(\Finally \sC\) is read as ``\(\sC\) holds at some point in the
future''.

For the sake of brevity, we use \(\Next^k \phi\), to indicate \(k\)
consecutive applications of \(\Next\), i.e., to state ``\(\phi\) holds
after \(k\) steps''. For an unbounded operator, such as \(\Until\), we denote with $\Until^{\leq k}$ its bounded version; e.g., \(\psi \Until^{\leq k} \phi\) reads ``\(\phi\)
holds within \(k\) time steps and \(\psi\) holds up to then.'' Note
that \(\psi \Until^{\leq k} \phi\) is, in fact, a bounded formula,
which can be expressed using Boolean operators and finitely many
\(\Next\); specifically, \(\psi \Until^{\leq k} \phi = \bigvee_{i=0}^k ((\bigwedge_{j=0}^{i-1}\Next^{j} \phi) \wedge \Next^{i} \psi)\).

Next, the \LTL{} satisfaction relation is defined as follows.

\begin{definition}[Satisfaction]
  \label{def: Satisfaction}
  The satisfaction relation \(\models\) for a path \(\pi \in \Pi_S\)
  of an \sysname\ \(S\), time step $t$, and \LTL{} formulae \(\phi\)
  and \(\psi\) is defined as follows.
  \begin{equation*}
    \arraycolsep=2pt
    \begin{array}{rlcl}
      (\pi,t) \models & \tr{\vc} \cdot \vx \leq d  & \text{iff} & \tr{\vc} \cdot \pi_{\sX}^{\tup{t}} \leq d;\\[2pt]
      (\pi,t) \models & \psi \vee \phi  & \text{iff} &  (\pi,t) \models \psi \text{ or } (\pi,t) \models \phi;\\[2pt]
      (\pi,t) \models & \psi \wedge \phi  & \text{iff} &  (\pi,t) \models \psi \text{ and } (\pi,t) \models \phi;\\[2pt]
      (\pi,t) \models & \lnot \phi  & \text{iff} &  (\pi,t) \not\models \phi;\\[2pt]
      (\pi,t) \models & \Next^k \phi  & \text{iff} & (\pi, t+k) \models \phi;\\[2pt]
      (\pi,t) \models & \Always \phi  & \text{iff} &  \forall i \geq 0, \ (\pi, t+i) \models \phi;\\[2pt]
      (\pi,t) \models & \Finally \phi  & \text{iff} &  \exists i \geq 0, \ (\pi,t+i) \models \phi;\\[2pt]
      (\pi,t) \models & \psi \Until \phi  & \text{iff} &  \exists i \geq 0, \ (\pi,t+i) \models \phi \  \wedge \\
                      &&& \bigwedge_{j=0}^{i-1} (\pi,t+j) \!\models \psi;\\[4pt]
      (\pi,t) \models & \psi \Release \phi & \text{iff} & \forall i \geq 0, \ (\pi,t+i) \!\models \phi; \vee \\
      \multicolumn{4}{r}{\exists i \geq 0, \ (\pi,t+i) \models \psi \wedge \bigwedge_{j=0}^{i} (\pi,t+j) \!\models \phi. }
    \end{array}
  \end{equation*}
  We say that a set of environment states $\sX'\subseteq \sX$ and a
  set of hidden states $\sH' \subseteq \sH$ satisfy $\phi$, written
  $\sX' \times \sH' \models \phi$, if $(\pi,0)\models \phi$ for all
  paths $\pi \in \Pi_S$ where $\pi_{\sX}^{\tup{0}}\in \sX'$ and
  $\pi_{\sH}^{\tup{0}}\in \sH'$ (i.e., for all paths whose initial
  states are in $\sX'$ and $\sH'$). We say that an \sysname{} $S$
  satisfies $\phi$, written \(S \models \phi\), if
  $\sX^{\tup{0}} \times \sH^{\tup{0}} \models \phi$, i.e., if every
  \sysname\ path that starts from an initial state of $S$ satisfies
  \(\phi\).
\end{definition}

We can now formulate the \LTL{} \emph{verification problem} for \sysname,
\emph{the problem we address in this paper}.

%
\begin{definition}[Verification Problem]
  \label{def: Model Checking}
  Given an \sysname\ \(S\) and formula \(\phi \in\) \LTL, determine if
  \(S \models \phi\).
\end{definition}
%
\noindent
When \(\phi\) is bounded, we call the problem above the \emph{Bounded Verification Problem} (\emph{BVP}), and when \(\phi\) is unbounded, we call it the \emph{Unbounded Verification Problem} (UVP). If an instance \(S \models \phi\) can be solved, we say that it is \emph{solvable}.

As we discussed in the related work, the problem considered in \cref{def: Model Checking} is more general than previous works as it allows multiple agents, arbitrary \LTL{} formulae, and non-deterministic partially-observable environments.
\vspace{-0.3em}


%
\section{\sysname{'} Bounded \LTL Verification}
\label{sec: Bounded Verification}
Here, we introduce our approach for solving the bounded
verification problem for \sysname{}. Particularly, In \cref{subsec: 
Bounded Complexity}, we show that BVP is decidable.
In \cref{subsec: Bounded Solving}, we provide two algorithms
for solving BVP.

\subsection{Decidability of Bounded \LTL{} Verification}
\label{subsec: Bounded Complexity}
We start by proving the decidability of verifying
atomic \LTL{} propositions and $\Next$ formulas in \cref{lem: Solving
  Atomic,lem: Solving Next}.
\begin{lemma}
  \label{lem: Solving Atomic}
  Verifying \(S \models \phi = (\tr{\vc} \cdot \vx \leq d)\) is decidable.
\end{lemma}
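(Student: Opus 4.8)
The plan is to reduce $S \models (\tr{\vc} \cdot \vx \leq d)$ to a single feasibility check over a finite system of PWL constraints, then invoke decidability of such feasibility problems. Note that by \cref{def: Satisfaction}, $S \models (\tr{\vc} \cdot \vx \leq d)$ holds iff $\tr{\vc} \cdot \pi_{\sX}^{\tup{0}} \leq d$ for every path $\pi \in \Pi_S$ with $\pi_{\sX}^{\tup{0}} \in \sX^{\tup{0}}$. Since the atom only constrains the initial environment state $\vx^{\tup{0}}$ and imposes nothing on the hidden states or the dynamics, this is equivalent to: for all $\vx^{\tup{0}} \in \sX^{\tup{0}}$, we have $\tr{\vc} \cdot \vx^{\tup{0}} \leq d$. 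Equivalently, the property \emph{fails} iff the constraint system ``$\vx \in \sX^{\tup{0}}$ and $\tr{\vc} \cdot \vx > d$'' is satisfiable (strict inequality, or $\tr{\vc}\cdot\vx \geq d + \epsilon$ over the rationals).

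The key steps I would carry out are as follows. First, I would appeal to the linear-definability assumption: $\sX^{\tup{0}}$ is expressible as a finite disjunction of conjunctions of linear inequalities, so it is a finite union of (rational) polyhedra. Second, I would observe that $\tr{\vc} \cdot \vx \leq d$ holds on all of $\sX^{\tup{0}}$ iff it holds on each polyhedral piece, and that this is decidable for a single polyhedron by linear programming: maximize $\tr{\vc} \cdot \vx$ subject to the defining linear constraints, and compare the optimum (or detect unboundedness) against $d$. Third, since there are finitely many pieces, iterating the LP check over all of them decides the property. I would also handle the edge case where $\tr{\vc}$ is irrational by noting that the problem statement's atoms use $\vc \in \RR^m$, but the verification is over a linearly definable system with rational data; if exact irrational coefficients are allowed, one can still decide the problem via the decidability of the first-order theory of real-closed fields, though the LP argument suffices for the rational case that matters in practice.

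The main obstacle I anticipate is largely bookkeeping rather than mathematical depth: carefully arguing that \emph{only} the initial-state constraint matters — i.e., that $\Pi_S$ is non-empty from every initial state so no path is vacuously excluded, and that the existential witness for a violation can always be completed to a full infinite path of $S$. This follows because $\tau$ returns a (one would need: non-empty) set of successors and the recurrent functions $r_i$ are total, so any choice of $\vx^{\tup{0}} \in \sX^{\tup{0}}$, $\vh^{\tup{0}} \in \sH^{\tup{0}}$ extends to at least one path; hence the quantifier over paths collapses cleanly to a quantifier over initial environment states. Once that reduction is made precise, decidability is immediate from linear programming over the finitely many polyhedral components of $\sX^{\tup{0}}$.
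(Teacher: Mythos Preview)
Your proposal is correct and follows essentially the same approach as the paper: reduce $S \models (\tr{\vc}\cdot\vx \leq d)$ to the universal statement $\forall \vx \in \sX^{\tup{0}}\ (\tr{\vc}\cdot\vx \leq d)$, then decide it by checking (un)satisfiability of the negation via linear/mixed-integer programming over the PWL description of $\sX^{\tup{0}}$. The paper's proof is terser (it simply invokes MILP on $\exists \vx \in \sX^{\tup{0}}\ \neg(\tr{\vc}\cdot\vx \leq d)$), whereas you additionally spell out the polyhedral decomposition and the path-extension bookkeeping, but the underlying argument is the same.
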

\begin{proof}
  From \cref{def: Satisfaction,def: Model Checking} we know that
  \(S \models \phi\) iff \(\tr{\vc} \cdot \sX^{\tup{0}} \leq d\),
  where \(\sX^{\tup{0}}\) is the set of initial states of the
  environment. In other words, we want to check whether
  \(\forall \vx \in \sX^{\tup{0}} \ (\tr{\vc} \cdot \vx \leq d)\),
  which can be solved using MILP \cite{BulutR21}
  because it entails checking the unsatisfiability of
  \(\exists \vx \in \sX^{\tup{0}} \ \neg(\tr{\vc} \cdot \vx \leq d)\).
\end{proof}
\begin{lemma}
  \label{lem: Solving Next}
  Verifying \(S \models \phi = \Next^t (\tr{\vc} \cdot \vx \leq d)\) is
  decidable.
\end{lemma}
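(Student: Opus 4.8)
The plan is to reduce $S \models \Next^t (\tr{\vc} \cdot \vx \leq d)$ to a finite MILP feasibility query, much as in \cref{lem: Solving Atomic}, but now unrolling the system dynamics for $t$ steps. By \cref{def: Satisfaction}, $S \models \Next^t(\tr{\vc}\cdot\vx\leq d)$ holds iff every path $\pi \in \Pi_S$ with $\pi_{\sX}^{\tup{0}} \in \sX^{\tup{0}}$ and $\pi_{\sH}^{\tup{0}} \in \sH^{\tup{0}}$ satisfies $\tr{\vc}\cdot\pi_{\sX}^{\tup{t}} \leq d$. Equivalently, this fails iff there exists a length-$t$ prefix of a system path — that is, a sequence $\vx^{\tup{0}},\vh^{\tup{0}},\ldots,\vx^{\tup{t}},\vh^{\tup{t}}$ with $\vx^{\tup{0}}\in\sX^{\tup{0}}$, $\vh^{\tup{0}}\in\sH^{\tup{0}}$, each consecutive pair linked by \cref{eq:aes-1,eq:aes-2,eq:aes-3,eq: Memoryful Policy} — such that $\tr{\vc}\cdot\vx^{\tup{t}} > d$. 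So it suffices to show this existential query is decidable.

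First I would introduce, for each time step $k = 0,\ldots,t$, copies of the state, hidden-state, observation, and action variables $\vx^{\tup{k}}, \vh^{\tup{k}}, \vo^{\tup{k}}, \va^{\tup{k}}$. Then I would write down the constraints encoding one step of evolution: $\vx^{\tup{0}}\in\sX^{\tup{0}}$ and $\vh^{\tup{0}}\in\sH^{\tup{0}}$; $\vo^{\tup{k}} = o(\vx^{\tup{k}})$; $(\va^{\tup{k}},\vh^{\tup{k+1}}) = r(\vo^{\tup{k}},\vh^{\tup{k}})$; and $\vx^{\tup{k+1}} \in \tau(\vx^{\tup{k}},\va^{\tup{k}})$ for $k=0,\ldots,t-1$. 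By the linear-definability \cref{assumption}, each of $o$, $r$, $\tau$, $\sX^{\tup{0}}$, $\sH^{\tup{0}}$ is expressible as a finite conjunction/disjunction of linear inequalities, hence — after introducing the usual auxiliary Boolean variables to encode disjunctions (the big-$M$ / PWL encoding of \cref{def: PWL}) — as a finite system of PWL constraints in the step-$k$ variables. Taking the conjunction over all $k$ yields a single finite PWL constraint system $\Psi_t$ over the $O(t)$ copies of variables, whose feasible assignments are exactly the length-$t$ path prefixes of $S$. Here it is important that the policy $r_i$ is realised by the bounded-memory recurrent function of \cref{eq: Memoryful Policy}: this is what lets us encode the policy's dependence on the entire observation history $\vo_i^{\tup{1}},\ldots,\vo_i^{\tup{t}}$ using only the finitely many hidden-state copies $\vh_i^{\tup{0}},\ldots,\vh_i^{\tup{t}}$, rather than an unbounded encoding.

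Finally I would conjoin the negated atom, adding the strict inequality $\tr{\vc}\cdot\vx^{\tup{t}} > d$ (handled in MILP in the standard way, e.g.\ via $\tr{\vc}\cdot\vx^{\tup{t}} \geq d + \varepsilon$ for a suitably small rational $\varepsilon$, or by the solver's native strict-inequality support as in \cref{lem: Solving Atomic}). The resulting instance $\Psi_t \wedge (\tr{\vc}\cdot\vx^{\tup{t}} > d)$ is a finite MILP, so its satisfiability is decidable; it is satisfiable precisely when $S \not\models \phi$, which settles the claim.

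The step I expect to require the most care is the faithful translation of the full $t$-step unrolling — in particular of the recurrent policy and the non-deterministic transition relation $\tau$ — into a \emph{single finite} PWL/MILP system: one must check that the disjunctive structure (from ReLU branches in $r$, and from the set-valued $\tau$) composes correctly across the $t$ time steps without the encoding size or the number of auxiliary discrete variables becoming unbounded. Given the bounded hidden-state formulation and \cref{assumption}, this is routine but is the crux of the argument; the actual MILP call is then a black box, exactly as in \cref{lem: Solving Atomic}.
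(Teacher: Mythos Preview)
Your proposal is correct and takes essentially the same approach as the paper: unroll the system evolution for $t$ steps, observe that the resulting constraint system is a finite PWL/MILP instance (the paper phrases this via the recursively defined reachable set $\sX^{\tup{t}}$ and notes that the number of free, pivot, and discrete variables grows only linearly in $t$), and decide the universal query by MILP on the negation. Your write-up is more explicit about the per-step variable copies and the role of the recurrent hidden state in keeping the encoding finite, but the underlying argument is the same.
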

\begin{proof}
  As we showed in the proof of \cref{lem: Solving Next}, solving
  \(S \models \phi\) for \(t = 0\) amounts to
  verifying
  \(\forall \vx \in \sX^{\tup{0}} \ (\tr{\vc} \cdot \vx \leq d)\).
  For \(t \geq 1\), it amounts to verifying
  \begin{equation}
    \label{eq: Next t}
    \forall \vx \in \sX^{\tup{t}} \ (\tr{\vc} \cdot \vx \leq d),
  \end{equation}
  where \(\sX^{\tup{t}}\) is recursively defined via
  \begin{equation}
    \label{eq: Recursive X}
    \sX^{\tup{t}} = \sP_{\sX}^{\tup{t}}, \quad
    \sP^{\tup{t}} = \set{e(\vx, \vh): (\vx, \vh) \in \sP^{\tup{t-1}}}.
  \end{equation}
  In \cref{eq: Recursive X}, the number of free, pivot, and discrete
  variables defining \(\sX^{\tup{t-1}}\) grows linearly with
  \(t\). Specifically, if the defining constraints of the \sysname{}
  have \(m_{\vx}, m_{\vy}\), and \(m_{\vdelta}\) free, pivot, and
  discrete variables, respectively, then the defining constraints of
  \(\sX^{\tup{t}}\) have at most \((t+1)m_{\vx}, (t+1)m_{\vy}\), and
  \((t+1)m_{\vdelta}\) free, pivot, and discrete variables,
  respectively. Therefore verifying \cref{eq: Next t} is decidable,
  which implies verifying
  \(S \models \Next^t \ (\tr{\vc} \cdot \vx \leq d)\) is decidable.
\end{proof}

This readily gives us the following result on the decidability of
verifying Boolean combinations of \LTL{} formulae.
\begin{lemma}
  \label{lem: Solving Boolean}
  Given \LTL{} formulae \(\phi\) and \(\psi\), such that
  \(S \models \phi\) and \(S \models \psi\) are decidable,
  \(S \models \phi \wedge \psi, S \models \phi \vee \psi\)
  and
  \(S \models \lnot\phi\)
  are also all decidable.
\end{lemma}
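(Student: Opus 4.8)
The plan is to reuse the reduction to \textsc{MILP} feasibility already exploited in \cref{lem: Solving Atomic,lem: Solving Next}, now applied to Boolean combinations of atoms and $\Next$-formulas. Since at this point in the development the only formulas $\chi$ for which $S \models \chi$ is known to be decidable are precisely such Boolean/$\Next$ combinations of atoms, I may assume $\phi$ and $\psi$ are of this form, and I would then dispatch the three connectives in turn.

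Conjunction is immediate and needs no new machinery: by \cref{def: Satisfaction,def: Model Checking}, a path satisfies $\phi \wedge \psi$ at time $0$ iff it satisfies both $\phi$ and $\psi$ at time $0$, hence $S \models \phi \wedge \psi$ iff $S \models \phi$ and $S \models \psi$; we decide each conjunct (possible by hypothesis) and return their conjunction. I would stress here that the analogous Boolean shortcut is \emph{not} available for the other two connectives: over a non-deterministic environment $S \models \phi \vee \psi$ is strictly weaker than $(S\models\phi) \vee (S\models\psi)$ — distinct paths may witness distinct disjuncts — and $S \models \neg\phi$ is \emph{not} the same as $\neg(S\models\phi)$, since $S \models \neg\phi$ asks that \emph{no} path from an initial state satisfy $\phi$.

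Accordingly, for $\vee$ and $\neg$ I would unfold the reduction underlying the hypothesis. Fix any Boolean/$\Next$ combination $\chi$ of atoms and let $k$ be the maximal $\Next$-nesting depth in $\chi$; then whether $(\pi,0)\models\chi$ depends only on the prefix $(\pi_\sX^{\tup 0},\dots,\pi_\sX^{\tup k})$. As in \cref{eq: Recursive X} and the proof of \cref{lem: Solving Next}, the set $\sP^{\tup k}$ of reachable length-$(k{+}1)$ prefixes is PWL-definable, with a number of free, pivot and discrete variables linear in $k$. Rewriting $\neg\chi$ in negation normal form makes it a positive Boolean combination of (possibly strict) linear inequalities over the prefix variables, which the standard big-$M$ encoding renders as a single PWL constraint $C_{\neg\chi}$. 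Then $S \models \chi$ holds iff no reachable length-$(k{+}1)$ prefix satisfies $\neg\chi$, i.e.\ iff the PWL system obtained by conjoining the defining constraints of $\sP^{\tup k}$ with $C_{\neg\chi}$ is infeasible — an \textsc{MILP} feasibility query, hence decidable. Instantiating $\chi := \phi \vee \psi$ and $\chi := \neg\phi$ (again Boolean/$\Next$ combinations of atoms) settles the two remaining cases; alternatively, once $\neg$ and $\wedge$ are available, $\vee$ follows by De Morgan. The one genuinely delicate point, and the step I expect to be the main obstacle, is exactly the observation flagged above: that $\vee$ and $\neg$ do not commute with the judgement $S \models {\cdot}$ under non-determinism, so one must drop to the path-level semantics and the PWL/\textsc{MILP} encoding rather than reason Boolean-algebraically on the top-level statements; the reachable-prefix construction and the big-$M$ encoding of Boolean combinations of linear predicates are then routine and already implicit in \cref{lem: Solving Atomic,lem: Solving Next}.
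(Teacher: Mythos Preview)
Your proof is correct and, in fact, more careful than the paper's treatment: the paper offers no proof of this lemma at all, merely asserting that it ``readily'' follows from \cref{lem: Solving Atomic,lem: Solving Next}. Your approach --- unfold to the PWL/\textsc{MILP} encoding of reachable prefixes and check infeasibility of the negated constraint --- is exactly the mechanism implicit in those two lemmas, so you are on the intended track.

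The point you flag as the ``main obstacle'' is genuine and worth noting: the paper's phrasing (``$S\models\phi$ decidable implies $S\models\neg\phi$ decidable'') reads as a black-box closure claim, but as you observe, $S\models\neg\phi$ and $\neg(S\models\phi)$ differ under non-determinism, and likewise $S\models\phi\vee\psi$ is not $(S\models\phi)\vee(S\models\psi)$. So a purely black-box oracle for $S\models\phi$ would \emph{not} suffice; one needs the white-box fact that the hypothesis was established via a PWL prefix encoding, which can then be extended to the Boolean combination. Your restriction to Boolean/$\Next$ combinations of atoms is exactly what is needed for the lemma's sole use in \cref{thm: Bounded}, and your argument makes explicit a step the paper leaves to the reader.
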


We can now prove that BVP is decidable.
%
\begin{theorem}
  \label{thm: Bounded}
  Bounded \LTL{} verification of \sysname{} is decidable.
\end{theorem}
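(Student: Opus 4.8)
The plan is to argue by structural induction on the bounded formula $\phi$. By \cref{def: Specifications}, a bounded formula contains no occurrence of the temporal operators $\Until$, $\Release$, $\Always$, or $\Finally$; its only temporal connective is $\Next$. Hence every bounded $\phi$ is built from atoms $\tr{\vc}\cdot\vx\leq d$ using only $\neg$, $\wedge$, $\vee$, and $\Next$, so the cases of the induction are exactly those already treated in \cref{lem: Solving Atomic,lem: Solving Next,lem: Solving Boolean}.

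To make the induction go through cleanly, I would prove the slightly stronger statement that for every bounded $\phi$ and every $k\geq 0$, verifying $S\models\Next^k\phi$ is decidable, and then instantiate $k=0$. The base case $\phi=(\tr{\vc}\cdot\vx\leq d)$ is precisely \cref{lem: Solving Next}. For the inductive step, I would push the leading block of $\Next$'s through the Boolean connectives using the standard equivalences $\Next(\psi_1\wedge\psi_2)\equiv\Next\psi_1\wedge\Next\psi_2$, $\Next(\psi_1\vee\psi_2)\equiv\Next\psi_1\vee\Next\psi_2$, and $\Next\neg\psi\equiv\neg\Next\psi$, all immediate from \cref{def: Satisfaction}, together with $\Next^k\Next\psi=\Next^{k+1}\psi$. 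Concretely: if $\phi=\psi_1\wedge\psi_2$ then $\Next^k\phi\equiv\Next^k\psi_1\wedge\Next^k\psi_2$, decidable by the induction hypothesis on $\psi_1,\psi_2$ and \cref{lem: Solving Boolean}; the cases $\vee$ and $\neg$ are analogous; and if $\phi=\Next\psi$ then $\Next^k\phi=\Next^{k+1}\psi$ is decidable by the induction hypothesis on $\psi$.

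The argument has essentially no obstacle, since the real work was done in the three lemmas; the only point needing care is the choice of induction hypothesis. It must be stated for $\Next^k\phi$ rather than for $\phi$ alone, because a sub-formula such as $\Next(\psi_1\wedge\psi_2)$ is not directly covered by \cref{lem: Solving Next} (which applies $\Next$ only to an atom): the equivalences above are precisely what reduce such a formula to a Boolean combination of $\Next^{k}$-atoms, each handled by \cref{lem: Solving Next} and then combined via \cref{lem: Solving Boolean}. I would also remark that, although the number of free, pivot, and discrete variables in the encoding of $\sX^{\tup{t}}$ grows linearly with the unrolling depth $t$ (as in the proof of \cref{lem: Solving Next}), it stays finite for any fixed bounded $\phi$, so the resulting MILP instances remain finite and the overall procedure terminates — yielding decidability of BVP.
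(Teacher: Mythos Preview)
Your proposal is correct and takes essentially the same approach as the paper: both use the duality of $\Next$ with $\neg$ and the distributivity of $\Next$ over $\wedge$ and $\vee$ to reduce an arbitrary bounded formula to a Boolean combination of formulae of the form $\Next^{i}(\tr{\vc}\cdot\vx\leq d)$, and then invoke \cref{lem: Solving Atomic,lem: Solving Next,lem: Solving Boolean}. The only difference is packaging: the paper states the rewriting directly, whereas you organise it as a structural induction with the strengthened hypothesis on $\Next^k\phi$ --- which is arguably cleaner, but not substantively different.
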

%
\begin{proof}
  Bounded \LTL{} formulae consist of atomic propositions alongside
  \(\neg, \wedge, \vee\), and \(\Next\) operators. Using the duality
  of \(\Next\) and \(\neg\) and the distributivity
  of \(\Next\) on \(\wedge\) and \(\vee\),\footnote{Duality means
    \(\neg \! \Next \! \phi \! \equiv \! \Next \neg \phi\), and
    distributivity means
    \(\Next (\phi \wedge \psi) \equiv \Next \phi \wedge \Next \psi\)
    and \(\Next (\phi \vee \psi) \equiv \Next \phi \vee \Next \psi\)
    for all \(\phi, \psi \in\) \LTL.}  we can rewrite every bounded
  \(\phi \in\) \LTL{} formula as a finite (linear in the length of
  \(\phi\)) combination (using \(\wedge, \vee\), and \(\neg\)) of
  formulae of the form \(\Next^{i} (\tr{\vc} \cdot \vx \leq d)\). It
  follows from \cref{lem: Solving Atomic,lem: Solving Next,lem:
    Solving Boolean} that verifying \(S \models \phi\) is
  decidable.
\end{proof}
\subsection{Solving Bounded \LTL{} Verification}
\label{subsec: Bounded Solving}
Above, we outlined the procedure for solving BVP in \cref{subsec:
  Bounded Complexity} at a high level. We now provide two methods for
solving the BVP algorithmically. The first is \emph{Bound Propagation
  through Time} (\emph{BPT}),
which is a sound but incomplete (i.e., overapproximated) technique for solving BVP. Next, we introduce \emph{Recursive MILP} (\emph{RMILP}), which allows encoding the BVP as an MILP problem, thereby providing a sound and complete (i.e., exact) solution to BVP.

\begin{algorithm}[t]
  \caption{Bound Propagation through Time}
  \label{alg: BPT}
  \DontPrintSemicolon

  \Input{\sysname{} \(S\), \(\phi=\Next^i (\tr{\vc} \cdot \vx \leq d) \in \)
    \LTL} \Output{\texttt{True/Unknown}}

  \(\vell^{\tup{0}}, \vu^{\tup{0}} = \min_{\sX^{\tup{0}} \times \sH^{\tup{0}}}(\vx), \max_{\sX^{\tup{0}} \times \sH^{\tup{0}}}(\vx)\)

  \For{\(t \gets 1, \dots, i\)}{

    \(\vell^{\tup{t}}, \vu^{\tup{t}} = \BP(e(\cdot),
    (\vell^{\tup{t-1}}, \vu^{\tup{t-1}})) \)

  }
  \If{\((\tr{\vc} \cdot \vell_{\sX}^{\tup{i}} \leq d) \wedge (\tr{\vc} \cdot \vu_{\sX}^{\tup{i}} \leq d)\)}{
    \Return \True
  }
  \Return \Unknown
\end{algorithm}
\subsubsection{Bound Propagation through Time (BPT)}
\label{subsec: BPT}
\emph{Bound propagation} \cite{Gowal+18,Wang+19} is commonly used for verifying the robustness
of neural networks. To use it for memoryful models,
such as RNNs, as well as systems with temporal dependencies, such as
\sysname, a more refined approach is required. We introduce
\emph{BPT}, an extension of bound propagation which allows us to propagate the network, memory, and
system bounds through time.

The high-level procedure behind BPT is outlined in \cref{alg:
  BPT}. First, we note that every bounded formulae in \LTL{} is a
finite composition (using \(\wedge, \vee, \neg\)) of formulae of the
form \(\Next^i (\tr{\vc} \cdot \vx \leq d)\), and so, w.l.o.g.\ we
illustrate BPT for checking the latter formulae. \cref{alg:
  BPT} starts with the bounds \(\vell^{\tup{0}}, \vu^{\tup{0}}\) of
the joint initial system and hidden state. It then uses a bound
propagation method \(\BP\) (such as interval or linear bound
propagation) to propagate \(\vell^{\tup{1}}, \vu^{\tup{1}}\) through
the combined system evolution function \(e(\cdot)\) to compute bounds
\(\vell^{\tup{1}}, \vu^{\tup{1}}\) on the joint system and hidden
state at the next time step (line 3).  This is repeated iteratively to
obtain bounds for any finite time step. Finally, the algorithm checks
whether the final bounds \(\vell^{\tup{i}}, \vu^{\tup{i}}\) satisfy
the given constraints (line 4).

Using bound propagation we have that, at each step $t$, \([\vell^{\tup{t}}, \vu^{\tup{t}}]\) \(\supseteq e([\vell^{\tup{t-1}}, \vu^{\tup{t-1}}] )\), where $[\vell^{\tup{t}}, \vu^{\tup{t}}]$ is the hyper-box defined by the bounds $\vell^{\tup{t}}$ and $\vu^{\tup{t}}$. In other words, BPT computes an over-approximation of the reachable states; thus, the algorithm returns \Unknown\ if the BPT bounds do not satisfy the constraints. 

Let us demonstrate BPT using an
example. We use an RNN for the sake of simplicity
rather than the full system evolution function.

\begin{algorithm}[t]
  \caption{Recursive MILP}
  \label{alg: RMILP}
  \DontPrintSemicolon
  \SetKwFunction{proc}{ buildConstraints}
  \Input{\sysname{} \(S\), \(\phi=\Next^i (\tr{\vc} \cdot \vx \leq d) \in \)
    \LTL}
  \Output{\texttt{True/False}}
    $(\sV, \sC)=$\proc{$i$}\;
    \Return \(\SAT(\forall \vx^{\tup{i}} . \   \tr{\vc} \cdot \vx^{\tup{i}} \leq d, \sV, \sC)\)\;
    \;

    \SetKwProg{myproc}{Procedure}{}{}
  \myproc{\proc{$t$}}{
  
  $\sV'=\{\vx^{\tup{t}}, \vh^{\tup{t}}\}$\;
  \If{$t=0$}{
    \Return $(\sV',\{\vx^{\tup{0}}\in \sX^{\tup{0}}, \vh^{\tup{0}}\in \sH^{\tup{0}}  \})$
  }
  \Else{
    $(\sV, \sC)=$\proc{$t-1$}\;
    $\sC' = \{(\vx^{\tup{t}}, \vh^{\tup{t}}) = e(\vx^{\tup{t-1}}, \vh^{\tup{t-1}})\}$\;
    \Return $(\sV\cup \sV', \sC \cup \sC')$
  }
  }  
\end{algorithm}
\begin{figure}[b]
  \centering
  \begin{tikzpicture}[node distance=25pt, thick, font=\footnotesize,
    main/.style={draw, circle, minimum size=11pt, fill=cyan, inner sep=1.9pt, text width=6pt},
    rnn/.style={draw, rectangle, minimum size=11pt, fill=magenta}]
    \node[main] (111) {$y_1$};
    \node[rnn] (121) [above right of=111] {$z$};
    \node[main] (112) [below right of=121] {$y_2$};
    \node[main] (101) [below of=111] {$x_1$};
    \node[main] (102) [below of=112] {$x_2$};
    \draw[->] (111) -- node[midway, above left, sloped, pos=.8] {1} (121);
    \draw[->] (112) -- node[midway, above right, sloped, pos=.8] {-1} (121);
    \draw[->] (101) -- node[midway, left] {-1} (111);
    \draw[->] (101) -- node[midway, above, sloped, pos=.85] {1} (112);
    \draw[->] (102) -- node[midway, above, sloped, pos=.85] {1} (111);
    \draw[->] (102) -- node[midway, right] {1} (112);
    \node[main] (211) [right of=112, node distance=33pt] {$y_1$};
    \node[rnn] (221) [above right of=211] {$z$};
    \node[main] (212) [below right of=221] {$y_2$};
    \node[main] (201) [below of=211] {$x_1$};
    \node[main] (202) [below of=212] {$x_2$};
    \draw[->] (121) -- node[midway, above] {1} (221);
    \draw[->] (211) -- node[midway, above left, sloped, pos=.8] {1} (221);
    \draw[->] (212) -- node[midway, above right, sloped, pos=.8] {-1} (221);
    \draw[->] (201) -- node[midway, left] {-1} (211);
    \draw[->] (201) -- node[midway, above, sloped, pos=.85] {1} (212);
    \draw[->] (202) -- node[midway, above, sloped, pos=.85] {1} (211);
    \draw[->] (202) -- node[midway, right] {1} (212);
    \node[main] (311) [right of=212, node distance=33pt] {$y_1$};
    \node[rnn] (321) [above right of=311] {$z$};
    \node[main] (312) [below right of=321] {$y_2$};
    \node[main] (301) [below of=311] {$x_1$};
    \node[main] (302) [below of=312] {$x_2$};
    \draw[->] (221) -- node[midway, above] {1} (321);
    \draw[->] (311) -- node[midway, above left, sloped, pos=.8] {1} (321);
    \draw[->] (312) -- node[midway, above right, sloped, pos=.8] {-1} (321);
    \draw[->] (301) -- node[midway, left] {-1} (311);
    \draw[->] (301) -- node[midway, above, sloped, pos=.85] {1} (312);
    \draw[->] (302) -- node[midway, above, sloped, pos=.85] {1} (311);
    \draw[->] (302) -- node[midway, right] {1} (312);
    \node (421) [right of=321, node distance=40pt] {$\bm{\cdots}$};
    \draw[->] (321) -- (421);
  \end{tikzpicture}
  \Description{A Recurrent Neural Network (RNN) with one feed-forward layer of size two and one recurrent layer of size one.}
  \caption{A simple RNN, consisting of a feed-forward first layer
    \(\vy = \ReLU(\vomega_1\vx)\), which maps the input
    \(\vx=\begin{psmallmatrix}x_1\\x_2\end{psmallmatrix}\) to
    \(\vy=\begin{psmallmatrix}y_1\\y_2\end{psmallmatrix}\) via
    left multiplication by
    \(\vomega_1=\begin{psmallmatrix}-1&1\\1&1\end{psmallmatrix}\). The
    last layer is recurrent and is defined as
    \(z^{\tup{t}} = \ReLU(\vomega_2\vy + z^{\tup{t-1}})\), where
    \(\vomega_2=\begin{psmallmatrix}1&-1\end{psmallmatrix}\) and \(z\)
    is the output variable with \(z^{\tup{0}} = 0\). The superscript
    \(^{\tup{t}}\) denotes indicates the time step and is omitted when
    there is no ambiguity (e.g., here, we have omitted it for
    \(x_1, x_2, y_1\) and \(y_2\), as they are not used in other time
    steps).}
  \label{fig: PWL}
\end{figure}
\begin{example}
  \label{ex: PWL}
  Consider the formula \(\phi_i = \Next^i (z \leq 2) \in\) \LTL, where
  \(z\) is the output of the RNN in \cref{fig: PWL}, with its input
  being \(\vx \in [0, 1]^2\). We want to verify \(\phi_2\) and
  \(\phi_3\) are satisfied.
\end{example}
Let us check whether \(\phi_2\) and \(\phi_3\) are satisfied using
\cref{alg: BPT}. At \(t = 0\), we have that
\([\vell^{\tup{0}}, \vu^{\tup{0}}] = [0, 1]^2 \times [0, 0]\), where
\([0, 1]^2\) indicates the bounds on \(\vx^{\tup{0}}\) and \([0, 0]\)
indicates the bounds on \(z^{\tup{0}}\) (line 1).
To obtain bounds on \(z^{\tup{1}}\), we need to propagate the bounds
through the RNN, which gives us
\(y^{\tup{1}}_1 = \ReLU(-x^{\tup{0}}_1 + x^{\tup{0}}_2) \in [0,
1]\). Similarly, we have \(y^{\tup{1}}_2 \in [0,2]\) and
\(z^{\tup{1}} \in [0, 1]\); hence,
\([\vell^{\tup{1}}, \vu^{\tup{1}}] = [0, 1]^3\). Repeating this, we
have \([\vell^{\tup{2}}, \vu^{\tup{2}}] = [0, 1]^2 \times [0, 2]\) and
\([\vell^{\tup{3}}, \vu^{\tup{3}}] = [0, 1]^2 \times [0,3]\); thus,
using BPT, we could verify that \(\phi_2\) is satisfied while \(\phi_3\) is not (since \(3 \not\leq 2\)).

\subsubsection{Recursive Mixed Integer Linear Programming (RMILP)}
At each step, BPT overapproximates the set of attainable values for
each variable. Hence, it does not provide a complete method
despite its efficiency. 
In RMILP, instead of propagating the bounds
for each variable, we construct an MILP equation describing the
set of attainable values \emph{exactly} in terms of the constraints
and variables of previous time steps, in a \emph{recursive} manner. Similarly to
BPT, it suffices to verify formulae of the form
\(\Next^i (\tr{\vc} \cdot \vx \leq d)\).

As outlined in \cref{alg: RMILP}, we want to verify that for all $\vx \in \sP_{\sX}^{\tup{i}}$, \(\tr{\vc} \cdot   \vx \leq d\), where 
\(\sP_{\sX}^{\tup{i}}\) is the joint system and hidden state at time
\(i\). 
This is verified using MILP by first finding the $\vx^*\in \sP_{\sX}^{\tup{i}}$ that maximises \(\tr{\vc} \cdot \vx\) and then checking whether \(\tr{\vc} \cdot \vx^* \leq d\). 
However, we first need to build the constraints that define \(\sP_{\sX}^{\tup{i}}\). We build them recursively, as described in the subprocedure \texttt{buildConstraints} of \cref{alg: RMILP} (lines 4-11). For any given step $t$, it returns the set of MILP variables $\sV$ and constraints $\sC$ that encode the reachable set \(\sP_{\sX}^{\tup{t}}\). For simplicity, we denote the MILP variables using $\vx^{\tup{t}}$ and $\vh^{\tup{t}}$. 

Returning to \cref{ex: PWL}, recall that we could not check
whether \(\phi_3\) is satisfied using BPT. Let us instead use RMILP to
check \(\phi_3\). We have that \(\phi_3\) holds iff
\(z^{\tup{3}} \leq 3\). Since
\(z^{\tup{3}} = \ReLU(\vomega_2 \ReLU(\vomega_1\vx^{\tup{2}}) +
z^{\tup{2}})\), \(z^{\tup{3}}\) is defined in terms of \(z^{\tup{2}}\)
and the input \(\vx^{\tup{2}}\).
In a similar manner, we can write \(z^{\tup{2}}\) in terms of
\(z^{\tup{1}}\) and \(\vx^{\tup{1}}\), and \(z^{\tup{2}}\) in terms of
\(z^{\tup{0}}\) and \(\vx^{\tup{0}}\). This results in PWL constraints
defining \(z^{\tup{3}}\) in terms of
\(\vx^{\tup{0}}, \vx^{\tup{1}}, \vx^{\tup{2}}\) and
\(z^{\tup{0}}\),\footnote{We note that \(\ReLU(x) = \max\set{0, x}\)
  can be written as a PWL constraint using the ``big-M'' method
  \cite{GNS09} as follows.
  \(y = \ReLU(x) \iff (y \geq x) \wedge (y \geq 0) \wedge (y = \delta
  x)\).}  allowing us to check whether \(z^{\tup{3}} \leq 3\), using
MILP. As we show next, RMILP allows us to prove that \(\phi_3\), in
fact, holds.

\subsubsection{Adaptive Splitting}
One limitation of RMILP is its high computational complexity, as we prove in
\cref{thm: Bounded}, leading to an exponentially large search
space for MILP solvers.
We use adaptive splitting
\cite{KouvarosL21,HenriksenL20} to mitigate this problem. In adaptive
splitting, we use BPT to provide the MILP solver with \textit{a priori} stricter bounds for
the continuous variables, thereby reducing the number of
branchings required to solve the MILP problem; i.e., the BPT bounds allow us to restrict the values that the discrete variables
in \cref{eq: PWL} attain. 

In \cref{ex: PWL}, we showed that \(x_1+x_2 \in [0, 2]\)
using BPT (for all \(t \in \NN\), hence, dropping the suffix \(^{\tup{t}}\)). So, \(y_2 = \ReLU(x_1 + x_2)\) simplifies to \(y_2= x_1 + x_2\), allowing us
to reduce the number of PWL constraints. 
Thus, adaptive splitting simplifies the equation
for \(z\) to
\(z^{\tup{t}} = \ReLU(\ReLU(-x_1 + x_2) - (x_1 + x_2) +
z^{\tup{t-1}})\). We can solve the resulting constraints as follows: for \(t = 1\), if \(x_2 \geq x_1\), then
\(z^{\tup{1}} = \ReLU(-2x_1) = 0\), and if \(x_2 < x_1\), then
\(z^{\tup{1}} = \ReLU(-x_1 - x_2) = 0\) for all
\(x_1, x_2 \in [0, 1]\). Repeating this for the following steps, we
obtain that \(z^{\tup{t}} = 0\) for all \(t\). Thus, not only
\(\phi_2\), but also \(\phi_3\) holds. 

In summary, the approach uses RMILP for verification, but first applies adaptive splitting to reduce any $\ReLU(y)$ expression into either $y$ or $0$ whenever BPT can show that $y\geq 0$ or $y\leq 0$, respectively. Similar reduction rules can be defined for other PWL activation functions, such as LeakyReLU, HardShrink and SoftShrink.


%
\section{\sysname{'} Unbounded \LTL Verification}
\label{sec: Unbounded Verification}
In this section, we address the UVP for \sysname{} and show how to
solve it for a wide range of unbounded \LTL{} specifications using
\textbf{(1)} \emph{Bounded-Path Verification} (BPMC), \textbf{(2)}
\emph{lasso} search, and \textbf{(3)} \emph{inductive invariant}
synthesis. Before presenting our decidability results, we note that
the UVP is undecidable in general (proof in \cref{sec: Undecidability Proof}).
%
\begin{theorem}
  \label{thm: Undecidability}
  \LTL{} verification of \sysname{} is undecidable.
\end{theorem}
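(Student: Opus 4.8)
The plan is to establish undecidability by reduction from a known undecidable problem about memoryful models. Since the excerpt already restricts to linearly definable systems with ReLU-like activations and the related work emphasizes that "verifying recurrent models... is significantly more challenging---even for recurrent linear models w.r.t. time-unbounded linear specifications," the natural approach is to encode a Turing-complete computational model inside an \sysname. The most standard route is to simulate a two-counter (Minsky) machine: the environment state holds the two counter values together with a discrete encoding of the control location, the (deterministic) transition $\tau$ implements increment, decrement, and zero-test operations, and we exploit the fact that ReLU gives us the $\max(0,\cdot)$ primitive needed to implement a bounded decrement and, in combination with affine maps and integer-valued discrete variables, to branch on whether a counter is zero. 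Crucially, all of this fits within PWL constraints, so the resulting $S$ is a legitimate \sysname.

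First I would fix the target of the reduction: the halting problem for two-counter machines from a fixed initial configuration, which is undecidable. Then I would describe the construction of an \sysname $S_M$ from a machine $M$: use $\sX \subseteq \RR$ (or a small fixed-dimensional $\RR^m$) with coordinates for the two counters and an integer-valued coordinate (or a tuple of binary $\vdelta$ variables) for the program counter; let $\sX^{\tup{0}}$ be the singleton initial configuration; make the agents trivial (e.g.\ memoryless with a constant action, so the interesting dynamics live entirely in $\tau$), which is legitimate since \sysname subsumes deterministic single-agent systems. Each instruction of $M$ becomes a case in the PWL definition of $\tau$: increments and decrements are affine, zero-tests select a branch via a discrete variable $\delta$ constrained so that $\delta=1 \iff$ counter $=0$ (encodable because counter values are nonnegative integers and we can bound them per step only at the encoding level — or, more carefully, use the big-M device over the reachable range). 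I would then take the \LTL{} formula $\phi_{\mathrm{halt}} = \Always\, \lnot(\text{``pc} = \text{HALT''})$, which is expressible as $\Always$ applied to a Boolean combination of atoms $\tr{\vc}\cdot\vx \leq d$ encoding the control location. By construction, $S_M \models \phi_{\mathrm{halt}}$ iff $M$ never reaches its halting state, i.e.\ iff $M$ does not halt. Since halting is undecidable, so is $S_M \models \phi_{\mathrm{halt}}$, hence the general UVP is undecidable.

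The main obstacle I anticipate is faithfully encoding the zero-test (and keeping counter values as exact nonnegative integers) within strictly PWL constraints over reals: a naive big-M encoding needs an a priori bound on counter values, which does not exist for a nonterminating computation. The clean fix is to note that the discrete variable $\delta$ in \cref{eq: PWL} together with the fact that, along any single run, consecutive states are related by $(\vx^{\tup{t}},\vh^{\tup{t}}) = e(\vx^{\tup{t-1}},\vh^{\tup{t-1}})$ lets us implement the test "$c=0$?" as a PWL function whose correctness only needs $c \geq 0$, not an upper bound — e.g.\ encode the successor so that if $c > 0$ the machine follows the nonzero branch and otherwise the zero branch, using $\ReLU$ to zero out the offending branch's contribution. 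A secondary subtlety is ensuring the encoding of the program counter as linear atoms is consistent with \LTL{}'s atoms being of the form $\tr{\vc}\cdot\vx \leq d$; this is routine since a finite control set embeds into $\ZZ$ and equality/membership constraints reduce to conjunctions of such inequalities. I would also remark that the same construction works if one insists on a genuinely multi-agent or non-deterministic environment (just add inert agents or a trivial nondeterministic choice), so undecidability is not an artifact of any degenerate sub-case; and I would defer the detailed instruction-by-instruction PWL gadgets to the appendix, as the excerpt indicates (\cref{sec: Undecidability Proof}).
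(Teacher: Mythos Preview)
Your approach is correct but takes a genuinely different route from the paper. The paper's proof places the undecidable computation in the \emph{agent's policy} rather than the environment: it invokes the Turing-universality of RNNs (Siegelmann--Sontag and Chen et al.), then builds a trivial fully-observable environment with identity observation function and transition $\tau:(\vx,\va)\mapsto\va$, so that the system dynamics collapse to the RNN recurrence; undecidability of $S\models\Finally(a_1>c)$ then follows directly from a cited theorem. You instead encode a two-counter machine inside $\tau$ and make the agent trivial. Your route is more self-contained and elementary (no reliance on the somewhat delicate rational-encoding arguments behind RNN universality), and it incidentally shows undecidability already for \emph{memoryless} agents, which the paper's proof does not; the paper's route is much shorter and highlights that memoryful policies alone suffice. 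Your handling of the zero-test is the right instinct: for nonnegative integer counters, $\ReLU(1-c)$ is exactly the indicator $[c=0]$ without any big-$M$ bound, and $\ReLU(c-1)$ gives saturated decrement, so the PWL encoding goes through once you commit to maintaining integrality along the (deterministic) run from a singleton initial state. One small correction: the $\vdelta$ in \cref{def: PWL} are auxiliary variables of the constraint, not state variables, so the program counter should live in $\vx\in\RR^m$ (taking integer values by construction) rather than in $\vdelta$.
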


%
\subsection{Bounded-Path Model Checking}
\emph{Bounded-Path Model Checking} (\emph{BPMC}) allows us to
partially solve the UVP by restricting unbounded formulae to finite
paths. It is a straightforward, yet effective, technique to solve the
UVP as we formally state in \cref{thm: BMC for UVP}.
\begin{theorem}
  \label{thm: BMC for UVP}
  Given formulae \(\phi, \psi \in\) \LTL{}, such that
  \(S \models \phi\) and \(S \models \psi\) are decidable, we
  can solve the instances of the UVP checkmarked in \cref{tbl: All
    Approaches}, using BPMC.
\end{theorem}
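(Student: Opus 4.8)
The plan is to handle each temporal operator of \LTL{} separately, showing in each case how satisfaction (or violation) over all infinite paths can be reduced to a \emph{finite} collection of bounded queries of the form $S \models \Next^i \chi$ for Boolean combinations $\chi$ of atoms, which are decidable by \cref{thm: Bounded} (equivalently, by \cref{lem: Solving Atomic,lem: Solving Next,lem: Solving Boolean}) under the hypothesis that $S \models \phi$ and $S \models \psi$ are decidable. The table in \cref{tbl: All Approaches} presumably records, for each operator, whether BPMC gives a \emph{sound} procedure for proving $S\models\cdot$ (find a counterexample of bounded length) or a \emph{sound} procedure for refuting it (exhibit a finite witness), or both; the theorem's content is exactly that those checkmarked entries are achievable.

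First I would treat the ``easy directions'', where a bounded path suffices to \emph{refute} a formula. For $\Always \phi$: $S \not\models \Always\phi$ iff some path has $(\pi,t)\not\models\phi$ for some $t$, i.e.\ iff $S\not\models\Next^t\phi$ for some finite $t$; so iterating the decidable check $S\models\Next^t\phi$ for $t=0,1,2,\dots$ is a sound (semi-)procedure for detecting violations of $\Always\phi$ — it halts and returns \False{} whenever $\Always\phi$ fails, using a bounded prefix. Dually, $\Finally\phi$ holds on a path as soon as $\phi$ holds at some finite position, so to \emph{prove} $S\models\Finally\phi$ we would search for a uniform bound $k$ with $S\models\bigvee_{i=0}^{k}\Next^i\phi$ (i.e.\ $S\models \True\Until^{\leq k}\phi$); finding such a $k$ is again a semi-procedure built from decidable bounded queries. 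The same reasoning gives bounded-witness procedures for $\psi\Until\phi$ (prove it by finding $k$ with $S\models\psi\Until^{\leq k}\phi$, using the expansion $\psi\Until^{\leq k}\phi = \bigvee_{i=0}^{k}((\bigwedge_{j=0}^{i-1}\Next^{j}\psi)\wedge\Next^{i}\phi)$ already noted in the text) and, dually, for refuting $\psi\Release\phi$ by finding a bounded prefix violating it.

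Then I would push the argument through nested formulas and Boolean combinations: whenever BPMC yields a decidable bounded query for the subformulas $\phi,\psi$, the constructions above only ever invoke $S\models\Next^i\chi$ for $\chi$ a Boolean combination of $\phi,\psi$ and atoms, and \cref{lem: Solving Boolean} plus \cref{lem: Solving Next} close the induction — so the checkmarked table entries propagate compositionally, which is precisely what the theorem asserts. For each operator I would state the exact bounded query, argue soundness (the finite witness genuinely certifies satisfaction/violation on \emph{all} infinite paths, via \cref{def: Satisfaction}), and note that the converse/unchecked entries are exactly the cases where no bounded prefix can certify the infinitary side (e.g.\ proving $\Always\phi$ or refuting $\Finally\phi$ would require an unbounded witness) — these are deferred to the lasso and inductive-invariant methods of the following subsections, and their absence from the table is justified by the undecidability of \cref{thm: Undecidability}.

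I expect the main obstacle to be bookkeeping rather than deep mathematics: getting the soundness statements exactly right when the environment is \emph{non-deterministic} and \emph{partially observable}. Because $S\models\phi$ quantifies universally over all paths, a ``bounded counterexample'' is really a bounded prefix from which \emph{every} extension violates $\phi$ (for $\Always$) or a bound $k$ such that \emph{every} path satisfies the bounded version (for $\Finally$, $\Until$); I must be careful that the decidable bounded queries $S\models\Next^i\chi$, which already range over all initial states and all non-deterministic transitions via the reachable-set recursion \cref{eq: Recursive X}, correctly capture this universal quantification, and that nesting does not secretly reintroduce an unbounded quantifier alternation. Making the case analysis exhaustive against the specific entries of \cref{tbl: All Approaches}, and stating clearly that BPMC gives \emph{semi}-decision procedures (it terminates on the checkmarked side but may not on the other), is the delicate part.
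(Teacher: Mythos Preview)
Your approach is correct and essentially the same as the paper's: reduce each unbounded operator to its bounded version expressed via $\Next$ and Boolean connectives, then invoke decidability of the bounded problem. The paper's proof is far terser than your plan---it simply lists the four expansions $\Finally^{\leq k}\phi \equiv \bigvee_{i} \Next^{i}\phi$, $\Always^{\leq k}\phi \equiv \bigwedge_{i} \Next^{i}\phi$, $\psi\Until^{\leq k}\phi$, and $\psi\Release^{\leq k}\phi$ and stops, with no per-direction soundness argument and no discussion of the unchecked entries.

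Two small remarks. First, the table checkmarks \emph{six} entries for BPMC, not four: in addition to the four you treat, BPMC can also establish $S\models\psi\Release\phi$ (via the finite ``$\psi$ releases $\phi$ at some step $i\leq k$'' disjunct of Release) and $S\not\models\psi\Until\phi$ (via a bounded prefix on which $\psi$ fails before $\phi$ ever holds). Your scheme handles these identically to the cases you already cover, but you should state them explicitly rather than leaving them implicit. Second, the bookkeeping concern you flag about non-determinism and universal quantification is unnecessary: since $S\models\chi$ is \emph{defined} as universal over all paths and the bounded reachable sets in \cref{eq: Recursive X} already absorb all non-deterministic branching, the bounded queries $S\models\Next^{i}\chi$ and their Boolean combinations automatically carry the right quantifier structure. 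There is no hidden alternation to untangle, which is why the paper omits the discussion entirely.
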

\begin{proof}
  To verify the specifications in \cref{tbl: All Approaches}
  using BPMC, it suffices to write their bounded versions using
  \(\Next\) and Boolean operators. To this end, we observe that for a
  given \(k \in \NN\), we have that
  \(\Finally^{\leq k} \phi \equiv \bigvee_{i=1}^{k} \Next^{i} \phi\),
  \(\Always^{\leq k} \phi \equiv \bigwedge_{i=1}^{k} \Next^{i} \phi\),
  \(\psi \Until^{\leq k} \phi \equiv \bigvee_{i=0}^k
  ((\bigwedge_{j=0}^{i-1} \Next^j \psi) \wedge \Next^i \phi)\), and
  \(\psi \Release^{\leq k} \phi \equiv \bigvee_{i=0}^{k-1}
  ((\bigwedge_{j=0}^{i} \Next^j \psi) \wedge \Next^i \phi)\).
\end{proof}
\begin{algorithm}[t]
  \caption{Lasso-assisted BPMC for \(\Finally \phi\)}
  \label{alg: Verifying Finally}
  \DontPrintSemicolon
  \Input{\sysname{} \(S\), \(\phi \in \) \LTL, and search depth
    \(k\)} \Output{\texttt{False/Unknown}}

  \(\sP^{\tup{0}} = \set{(\vx, \vh) \in \sX^{\tup{0}} \times
    \sH^{\tup{0}} : \vx \not\models \phi}\)

  \For{\(t \gets 1, \dots, k\)}{


    \(\sP^{\tup{t}} = \set{(\vx, \vh) \in e(\sP^{\tup{t-1}}) : \vx \not\models \phi}\)
    
    \For{\(t' \gets 0, \dots, t-1\)}{
      
      \If{\(\sP^{\tup{t}} \cap \sP^{\tup{t'}} \neq \emptyset\)}{
        \Return \False \Comment*[r]{Lasso is found}
      }
    }
  }
  \Return \Unknown
\end{algorithm}
%

%
\subsection{Lasso Search}
\label{subsec: Lasso Search}
Bounded model checking for solving the UVP can be further enhanced via
lasso search \cite{Biere+09}. For an \sysname{} \(S\) and formula
\(\phi \in\) \LTL, the idea behind the lasso search is to find a path
\(\rho \in \Pi_S\), such that for some 
\(0 \leq t' < t\), we have that
\(\rho^{\tup{1}}, \dots, \rho^{\tup{t}} \not\models \phi\) and
\(\rho^{\tup{t'}} = \rho^{\tup{t}}\). Such a path is called a
\emph{lasso}, and \(\rho^{\tup{1, \dots, t}}\) is witness for
\(S \not\models \Finally \phi\). This is summarised in
\cref{alg: Verifying Finally}.

We observe that \cref{alg: Verifying Finally} is complete.
To prove this, we note that it returns \False{} only if
\(\sP^{\tup{t}} \cap \sP^{\tup{t'}} \neq \emptyset\) for some
\(0 \leq t' < t \leq k\). This means that there exist
\((\vx, \vh) \in \sP^{\tup{t}} \cap \sP^{\tup{t'}}\),
\(\vx^{\tup{0}}, \dots, \vx^{\tup{t}} \in \sX\), and
\(\vh^{\tup{0}}, \dots, \vh^{\tup{t}} \in \sH\), such that
\(\vx^{\tup{t'}} = \vx^{\tup{t}} = \vx\),
\(\vh^{\tup{t'}} = \vh^{\tup{t}} = \vh\), and
\((\vx^{\tup{i}}, \vh^{\tup{i}}) = \tau(\vx^{\tup{i-1}},
r(o(\vx^{\tup{i-1}}), \vh^{\tup{i-1}}))\) for \(i = 1, \dots,
t\). This implies that \(\vx^{\tup{0}}, \dots, \vx^{\tup{t}}\) is a
lasso that entirely does not satisfy \(\phi\); thus,
\(S \not\models \Finally \phi\), and \cref{alg: Verifying
  Finally} is complete. \cref{thm: Lasso} generalises this
reasoning.
\begin{theorem}
  \label{thm: Lasso}
  Given \LTL{} formulae \(\phi\) and \(\psi\), such that
  \(S \models \phi\) and \(S \models \psi\) are solvable (e.g., using
  BPMC, lasso, or invariants), then, using lasso search, we can solve
  the instances of UVP checkmarked in \cref{tbl: All Approaches}.
\end{theorem}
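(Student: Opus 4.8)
The plan is to reduce every checkmarked instance to the single case \(S \not\models \Finally\phi\), already settled by \cref{alg: Verifying Finally}, and to discharge whatever bounded obligation remains with BPMC via \cref{thm: BMC for UVP}. I would first isolate the correctness of \cref{alg: Verifying Finally} as a reusable lemma: whenever it reports \False{} it has exhibited a state \((\vx,\vh)\in\sP^{\tup{t}}\cap\sP^{\tup{t'}}\) with \(t'<t\), and because the evolution \(e\) is a relation that depends only on the current pair \((\vx,\vh)\) -- the hidden state already summarises each agent's observation history -- the entirely \(\phi\)-violating transition sequence leading from a copy of \((\vx,\vh)\) at step \(t'\) back to \((\vx,\vh)\) at step \(t\) can be re-traversed forever, producing an ultimately-periodic path \(\pi\in\Pi_S\) on which every visited state violates \(\phi\); hence \(S\not\models\Finally\phi\). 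The lemma records the two ingredients this needs: (i) each \(\sP^{\tup{t}}\) is linearly definable with description size linear in \(t\), by exactly the argument in the proof of \cref{lem: Solving Next} (the evolution is PWL and we only intersect with the satisfaction set of \(\phi\)), so the emptiness and intersection tests in the algorithm are MILP-decidable; and (ii) the set \(\{(\vx,\vh):(\vx,\vh)\not\models\phi\}\) is decidable -- indeed linearly definable -- whenever \(S\models\phi\) is solvable, which is precisely the hypothesis on \(\phi\) (and symmetrically on \(\psi\)).

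With this engine in hand I would go through \cref{tbl: All Approaches} row by row, rewriting the negation of each checkmarked formula -- restricted to ultimately-periodic paths -- into a shape that a lightly augmented \cref{alg: Verifying Finally} detects, using only the \LTL{} dualities of \cref{def: Satisfaction}. For \(\Finally\phi\) the lemma applies verbatim. For \(\psi\Until\phi\), use \(\lnot(\psi\Until\phi)\equiv(\lnot\psi)\Release(\lnot\phi)\): a counterexample path either never satisfies \(\phi\) -- found unchanged as a \(\phi\)-violating lasso -- or it keeps \(\phi\) false on a finite prefix up to and including the first step where \(\psi\) fails, which is a bounded formula dispatched to BPMC through \cref{thm: BMC for UVP}; these two cover all counterexamples, and both only query the (solvable) satisfaction sets of \(\phi\) and \(\psi\). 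For the nested stability property \(\Finally\Always\phi\), use \(\lnot\Finally\Always\phi\equiv\Always\Finally\lnot\phi\): on an ultimately-periodic path ``\(\lnot\phi\) infinitely often'' is equivalent to ``some state of the looping segment violates \(\phi\)'', so it suffices to drop the global \(\lnot\phi\) restriction on the prefix and instead record, for each candidate loop \(\sP^{\tup{t'}}\cap\sP^{\tup{t}}\neq\emptyset\), whether a \(\lnot\phi\)-state is traversed between \(t'\) and \(t\); the remaining rows (\(\Always\Finally\phi\), \(\psi\Until\Always\phi\), and so on) reduce the same way, pushing negations inward until the outermost unbounded obligation is a \(\Finally\) over a formula that is solvable by hypothesis -- possibly itself via BPMC, lasso, or invariants, so the reduction is compositional and extends the class of solvable instances incrementally.

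The main obstacle -- and the real content of the theorem -- is the case-by-case verification that the negation of each checkmarked formula, once cut down to lasso-shaped paths, is genuinely captured by the single-cycle, bounded-prefix bookkeeping that \cref{alg: Verifying Finally} performs: formulas whose refutation would need two independent cycle conditions, or an unbounded alternation between ``eventually'' and ``always'' obligations that no finite lasso witnesses, are exactly the rows that are \emph{not} checkmarked, and drawing that boundary cleanly is where the argument must be careful. A secondary, routine point is that every manipulated set stays finitely representable: \(e(\sP^{\tup{t-1}})\) is PWL-definable with size linear in \(t\) by \cref{lem: Solving Next}, and intersecting it with the satisfaction set of a solvable \(\phi\) or with the bounded residuals produced by BPMC preserves this, so each emptiness/intersection query can be handed to the MILP and bound-propagation machinery of \cref{sec: Bounded Verification}. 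Finally, exactly as for \cref{alg: Verifying Finally} itself, the resulting procedure is sound and is complete only relative to the existence of a counterexample that is ultimately periodic with period at most the search depth \(k\); so ``solving'' a checkmarked instance means lasso search exposes the sought counterexample whenever it has that shape, and returns \Unknown{} otherwise, never an incorrect verdict.
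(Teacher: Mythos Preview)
Your plan misreads what the lasso row of \cref{tbl: All Approaches} actually checkmarks. The table has only four formula shapes---\(\psi\Release\phi\), \(\Always\phi\), \(\psi\Until\phi\), \(\Finally\phi\)---and for lasso search the checkmarks sit at exactly three cells: \(S\not\models\Finally\phi\), \(S\not\models\psi\Until\phi\), and \(S\not\models\psi\Release\phi\). You handle the first directly and the second via a (reasonable) case split, but then you drift into nested shapes \(\Finally\Always\phi\), \(\Always\Finally\phi\), \(\psi\Until\Always\phi\) as if they were separate rows to be discharged. They are not; any nesting is absorbed into the subformula \(\phi\) (or \(\psi\)) by the hypothesis that \(S\models\phi\) is solvable. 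Meanwhile the one remaining genuine obligation---\(S\not\models\psi\Release\phi\)---is never addressed. That is a real gap, not just an omission: pushing the negation inward gives \(\lnot(\psi\Release\phi)\equiv(\lnot\psi)\Until(\lnot\phi)\), whose outermost operator is \(\Until\), not \(\Finally\), so your ``reduce everything to a \(\Finally\)'' recipe does not close the case without further argument.

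For comparison, the paper's proof is much shorter and more direct. For \(\psi\Until\phi\) it simply runs \cref{alg: Verifying Finally} on \(\Finally\phi\): if a \(\phi\)-violating lasso is found, then \(\phi\) never holds on that path and \(\psi\Until\phi\) fails a fortiori. For \(\psi\Release\phi\) it runs \cref{alg: Verifying Finally} on \(\Finally\psi\): if a \(\psi\)-violating lasso is found, then \(\psi\) never holds on that path, so \(\psi\Release\phi\) reduces to \(\Always\phi\) there, and one then checks (boundedly) whether \(\phi\) fails at some step along that path. Your \(\psi\Until\phi\) treatment is in fact \emph{more} complete than the paper's---you also catch counterexamples where \(\psi\) fails before \(\phi\) ever holds, which the paper's argument does not---so the decomposition idea is sound; you just need to drop the spurious nested cases and supply the missing \(\Release\) argument, for which the paper's two-stage ``lasso on \(\psi\), then bounded check on \(\phi\)'' is the intended route.
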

\begin{proof}
  We have already shown in \cref{subsec: Lasso Search} that the lasso
  search allows verifying \(S \not\models \Finally\phi\).  To check
  \(S \not\models \psi \Until \phi\), we first search for a lasso for
  \(S \not\models \Finally\phi\). If a lasso is not found, we cannot
  prove the specification. Otherwise, we have found a lasso of length
  \(k\) proving that \(S \not\models \Finally\phi\). This means that
  \(S\) will never satisfy \(\phi\); hence
  \(S \not\models \psi \Until \phi\). 
  %
  To check \(S \not\models \psi \Release \phi\), we first search for a
  lasso for \(S \not\models \Finally\psi\). If a lasso is not found,
  we cannot prove the specification. Otherwise, we have found a lasso
  of length \(k\) proving that \(S \not\models \Finally\psi\). This
  means that \(S\) will never satisfy \(\psi\). Now, for a fixed bound
  \(i \in \NN\), we check whether
  \(S \not\models \Next^{\leq i} \phi\). If
  \(S \not\models \Next^{\leq i} \phi\), we have that
  \(S \not\models \psi \Release \phi\).
\end{proof}
  \begin{table}[t]
    \centering
    \caption{Specifications that some (not all) instances of that form
    \emph{may} be verified using each
      method. A cross mark \xmark{} means that an approach (e.g.,
      Lasso) \emph{cannot} be used for instances of that form (e.g.,
      \(\psi \Release \phi\)), and a checkmark \cmark{} means it
      \emph{can}, however, it may return the correct answer or
      \Unknown.}
    \label{tbl: All Approaches}
    \begin{tabular}{lccccc}
      Method & Satisfaction & $\psi \Release \phi$ & $\Always\, \phi$  & $\psi \Until \phi$ & $\Finally\, \phi$\\
      \midrule
      \multirow{2}{*}{BPMC} & $S \models$ & \cmark & \xmark & \cmark & \cmark\\
      & $S \not\models$ & \cmark & \cmark & \cmark & \xmark\\
      \midrule
      \multirow{2}{*}{Lasso} & $S \models$ & \xmark & \xmark & \xmark & \xmark\\
      & $S \not\models$ & \cmark & \xmark & \cmark & \cmark\\
      \midrule
      \multirow{2}{*}{Invariant} & $S \models$ & \cmark & \cmark & \xmark & \xmark\\
      & $S \not\models$ & \xmark & \xmark & \cmark & \cmark\\
    \end{tabular}
  \end{table}
\begin{remark}
  \label{rem: Lasso}
  \cref{alg: Verifying Finally} can only be used alongside
  MILP because BPT overapproaximates \(\sP^{\tup{t}}\), and therefore
  the algorithm may wrongly return \False{} in line 5.
\end{remark}
\subsection{Inductive Invariant Synthesis}
Despite the versatility of the lasso approach, it cannot be used to
verify specifications of the form \(S \models \Always\phi\) in
non-deterministic environments. This is because to verify
\(S \models \Always\phi\), we need to ensure that
\(\pi \models \Always \phi\) for \emph{all} valid paths
\(\pi \in \Pi_{S}\), whilst lassos only allow us to search for the
\emph{existence} of a path that satisfies a given specification.
This leads us to use inductive invariants \cite{Biere+09}. For an
\sysname{} \(S\), an \emph{inductive invariant}, or \emph{invariant}
for short, is a set \(\sI \subseteq \sX \times \sH\), such that
\(\sX^{\tup{0}} \times \sH^{\tup{0}} \subseteq \sI\) and
\(e(\sI) \subseteq \sI\). If there exists an inductive invariant
\(\sI\) for \(S\) such that \(\sI \models \phi\) for a formula
\(\phi \in\) \LTL, it follows that \(S \models \Always\phi\) since all
initial and future states of \(S\) satisfy \(\phi\). This allows us to
verify a wider range of unbounded formulae, as shown in
\cref{thm: Inductive Invariants} and outlined in
\cref{tbl: All Approaches}
\begin{theorem}
  \label{thm: Inductive Invariants}
  Given an \sysname{} \(S\) and \LTL{} formulae \(\phi\) and \(\psi\),
  such that \(S \models \phi\) and \(S \models \psi\) are solvable,
  then, using inductive invariants, we can solve the instances of UVP
  checkmarked in \cref{tbl: All Approaches}.
\end{theorem}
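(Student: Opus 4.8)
The plan is to reduce all four checkmarked \textsc{Invariant} entries of \cref{tbl: All Approaches} --- proving $S\models\Always\phi$, proving $S\models\psi\Release\phi$, disproving $S\models\Finally\phi$, and disproving $S\models\psi\Until\phi$ --- to one primitive: synthesise a PWL-definable candidate $\sI\subseteq\sX\times\sH$ and \emph{certify} it is an inductive invariant enforcing a chosen state-wise property. Certification has three parts, each discharged by constraint solving: \textbf{(initiation)} $\sX^{\tup{0}}\times\sH^{\tup{0}}\subseteq\sI$; \textbf{(consecution)} $e(\sI)\subseteq\sI$, read relationally since $\tau$ is non-deterministic, i.e.\ $(\vx,\vh)\in\sI$ implies every successor in $e(\vx,\vh)$ lies in $\sI$; and \textbf{(safety)} $\sI\models\chi$ for a formula $\chi$ depending on the instance. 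Because $e$, $\sX^{\tup{0}}$, $\sH^{\tup{0}}$ and the complement of a PWL $\sI$ are all PWL, initiation and consecution become MILP (un)satisfiability queries exactly as in \cref{lem: Solving Atomic,lem: Solving Next}; safety is handled recursively via the solvability hypothesis.

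For $S\models\Always\phi$ I would take $\chi:=\phi$. The key observation is that once $\sI$ passes initiation and consecution, every state along every path from $\sX^{\tup{0}}\times\sH^{\tup{0}}$ stays in $\sI$; and since a suffix of an \sysname{} path is again an \sysname{} path starting at such a state, $\sI\models\phi$ upgrades to $\sI\models\Always\phi$, hence $S\models\Always\phi$. The safety check $\sI\models\phi$ is just the verification problem for $S$ with the initial set replaced by $\sI$, which is solvable by hypothesis (BPMC, lasso, and nested invariants are all parametric in the initial set); if no certifying $\sI$ is found, the procedure returns \Unknown. For $S\models\psi\Release\phi$ I would run exactly the same procedure: by \cref{def: Satisfaction}, $(\pi,0)\models\Always\phi$ is the first disjunct of the release clause and holds for \emph{any} $\psi$, so $S\models\Always\phi$ already yields $S\models\psi\Release\phi$ (the hypothesis on $\psi$ is not even used).

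For disproving $S\models\Finally\phi$ and $S\models\psi\Until\phi$ I would instead take $\chi:=\neg\phi$; $S\models\neg\phi$ is solvable by \cref{lem: Solving Boolean} given that $S\models\phi$ is, so the safety check is available. A certifying $\sI$ gives $S\models\Always\neg\phi$ by the same suffix argument, i.e.\ $S\models\neg\Finally\phi$; assuming $\sX^{\tup{0}}\times\sH^{\tup{0}}\neq\emptyset$, this entails $S\not\models\Finally\phi$. And since $(\pi,0)\models\psi\Until\phi$ forces $(\pi,0)\models\Finally\phi$, the very same $\sI$ witnesses $S\not\models\psi\Until\phi$ as well; \Unknown{} is returned when synthesis fails.

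I expect the main obstacle to be making the recursion and polarity bookkeeping rigorous rather than any single calculation: one must argue that ``solvable'' is preserved when the set of initial states is swapped for $\sI$, that the quantifier alternation in consecution is encoded soundly (check unsatisfiability of $\exists(\vx,\vh)\in\sI,\ (\vx',\vh')\in e(\vx,\vh)$ with $(\vx',\vh')\notin\sI$), and that the reductions in the $\Finally$/$\Until$ cases use the semantic facts $\neg\Finally\phi\equiv\Always\neg\phi$ and $\psi\Until\phi\Rightarrow\Finally\phi$ correctly under the ``$S\models$'' versus ``$S\not\models$'' distinction. Actually locating a good candidate $\sI$ is a separate, heuristic concern and does not affect the statement, since \cref{tbl: All Approaches} only claims these instances \emph{can be attempted}, returning either the correct verdict or \Unknown.
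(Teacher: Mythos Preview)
Your proposal is correct and follows essentially the same route as the paper: reduce all four checkmarked cases to establishing $S\models\Always\chi$ via an inductive invariant, with $\chi=\phi$ for the $\Always$ and $\Release$ cases and $\chi=\neg\phi$ for the $\Finally$ and $\Until$ cases, then invoke $\Always\phi\Rightarrow\psi\Release\phi$ and $\Always\neg\phi\Rightarrow\neg\Finally\phi\Rightarrow\neg(\psi\Until\phi)$.

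If anything, you are more careful than the paper in two places: you make the suffix-closure argument explicit (why $\sI\models\phi$ lifts to $\sI\models\Always\phi$ when $\phi$ is itself temporal), and you flag the nonemptiness of $\sX^{\tup{0}}\times\sH^{\tup{0}}$ needed to pass from $S\models\neg\Finally\phi$ to $S\not\models\Finally\phi$. The paper's proof elides both points. Your discussion of how the solvability hypothesis must be parametric in the initial set, and of the MILP encoding of consecution, is additional scaffolding that the paper omits but is consistent with its framework.
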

\begin{proof}
  We show that inductive invariants can be used to prove
  \(S \models \Always \phi\) specifications. Assume that
  \(\sI\) is an inductive invariant for \(S\). If
  \(\sI_{\sX} \models \phi\), then by definition, we have that the
  initial states and all future states resulting from finite
  evolution of \(S\) satisfy \(\phi\). Hence, an invariant \(\sI\) for
  \(S\), such that \(\sI_{\sX} \models \phi\), implies
  \(S \models \Always \phi\).
  To prove that inductive invariants can be used to verify
  \(S \not\models \Finally \phi\) and
  \(S \not\models \psi \Until \phi\), note that we can check for
  \(S \models \Always \neg \phi\) using inductive invariants. If we can
  verify that \(S \models \Always \neg \phi\), then
  \(S \not\models \Finally \phi\) and
  \(S \not\models \psi \Until \phi\).
  A similar reasoning applies to
  \(S \models \psi \Release \phi\). We can check
  \(S \models \Always\phi\) using invariants. If we can
  verify that \(S \models \Always\phi\), then
  \(S \models \psi \Release \phi\).
\end{proof}
\begin{algorithm}[t]
  \caption{Inductive invariants for \(\Always \phi\)}
  \label{alg: Verifying Always}
  \DontPrintSemicolon

  \Input{\sysname{} \(S\), \(\phi \in \) \LTL, and search depth \(k\)}
  \Output{\texttt{True/False/Unknown}}

  \(\sI^{\tup{0}} = \sX^{\tup{0}} \times \sH^{\tup{0}}\)

  \(\sM^{\tup{0}} = \sX \times \sH\)

  \For{\(t \gets 1, \dots, k\)}{

    \(\sI^{\tup{t}} = e(\sI^{\tup{t-1}})\)

    \(\sM^{\tup{t}} = e(\sM^{\tup{t-1}})\)

    \uIf{\(\sI_{\sX}^{\tup{t}} \not\models \phi\)}{
      \Return \False
    }
    \uElseIf{\(\sM_{\sX}^{\tup{t-1}} \models \phi\)}{      
      \Return \True \Comment*[r]{Maximal invariant}
    }
    \ElseIf{\(\sI^{\tup{t}} \subseteq \bigcup_{i=0}^{t-1} \sI^{\tup{i}}\)}{
      \Return \True \Comment*[r]{Minimal invariant}
    }
  }
  \Return \Unknown
\end{algorithm}

Even though inductive invariants theoretically allow us to verify a
wider range of unbounded formulae, finding such invariants in practice
is not straightforward. We provide two techniques for this purpose in
\cref{alg: Verifying Always} by searching for \emph{maximal}
and \emph{minimal} inductive invariants. For the minimal invariant, we
start from \(\sI = \sI^{\tup{0}} =\sX^{\tup{0}} \times \sH^{\tup{0}}\)
(line 1), which must be contained by any invariant by definition. We
then keep expanding \(\sI\) by
\(\sI^{\tup{t}} = e^{(t)}(\sI^{\tup{0}})\), until we find an
invariant (line 10), in which case
\(\sI = \bigcup_{i=0}^{t-1} \sI^{\tup{i}}\) is the minimal invariant, or
we exceed a given maximum number of iterations \(k\).  For the maximal invariant,
we start with the trivial largest invariant, i.e.,
\(\sM^{\tup{0}} = \sX \times \sH\). If
\(\sM_{\sX}^{\tup{0}} = \sX \models \phi\), it follows that
\(S \models \Always \phi\). Otherwise, the algorithm iteratively
computes \(\sM^{\tup{t}} = e^{(t)}(\sM^{\tup{0}})\) until
\(\sM_{\sX}^{\tup{t}} \models \phi\) (line 8), in which case
\(\sM = \bigcup_{i=0}^{t-1} \sI^{\tup{i}} \cup \sM^{\tup{t-1}}\) is the 
invariant, or it exceeds a given maximum time step \(k\).

To prove that \(\sI\) and \(\sM\) (if found) in \cref{alg:
  Verifying Always} are invariant, we observe that by construction,
\(\sX^{\tup{0}}\times\sH^{\tup{0}} \subseteq \sI, \sM\). For \(\sI\),
we have that
\(e(\sI) = e(\bigcup_{i=0}^{t-1} \sI^{\tup{i}}) = \bigcup_{i=0}^{t-1}
e(\sI^{\tup{i}}) = \bigcup_{i=0}^{t-1} \sI^{\tup{i+1}} \subseteq
\bigcup_{i=0}^{t-1} \sI^{\tup{i}} = \sI\) by line 10; therefore,
\(\sI\) is an invariant. For \(\sM\), we have that
$\sI^{\tup{t}}, \sM^{\tup{t}} \subseteq
\sM^{\tup{t-1}}=e^{t-1}(\sM^{\tup{0}})$ because
$\sI^{\tup{t}}=e^{t-1}(\sI^{\tup{1}})$ and
$\sM^{\tup{t}}= e^{t-1}(\sM^{\tup{1}})$, and by construction, it follows that
\(\sI^{\tup{1}}, \sM^{\tup{1}} \subseteq \sM^{\tup{0}}\).
Hence,
\(e(\sM) = e(\bigcup_{i=0}^{t-1} \sI^{\tup{i}} \cup \sM^{\tup{t-1}}) =
\bigcup_{i=0}^{t-1} e(\sI^{\tup{i}}) \cup e(\sM^{\tup{t-1}}) =
\bigcup_{i=0}^{t-1} \sI^{\tup{i+1}} \cup \sM^{\tup{t}} \subseteq
\bigcup_{i=0}^{t-1} \sI^{\tup{i}} \cup \sM^{\tup{t-1}} = \sM\);
\mbox{thus, \(\sM\) is an invariant.}

\vspace{2pt}
\begin{remark}
  \label{rem: Inductive Invariant}
  \cref{alg: Verifying Always} works with both RMILP and BPT.
  However, since BPT is incomplete, we have to return \Unknown{}
  instead of \False{} in line 7 when using BPT (because \(\phi\) may be violated by a spurious states in \(\sI_{\sX}^{\tup{t}}\).)
\end{remark}
%


%
\section{Experimental Evaluations}
\label{sec: Evaluation}
We evaluate\footnote{For all experiments, training and verification
  were performed on a workstation with a 16-core AMD 5955WX CPU, 256GB
  of DDR4 RAM, and an NVIDIA RTX 4090 GPU. The verification algorithms
  are implemented in Python using the NumPy library \cite{Numpy} and
  Gurobi optimiser \cite{Gurobi}. The source code will be released
  on \href{https://github.com/mehini/Vern}{\texttt{github.com/mehini/Vern}}.}
  our approach by verifying unbounded and
bounded \LTL{} specifications for the Cart Pole, Pendulum, and Lunar
Lander environments from the Gymnasium library \cite{Gymnasium}, and
the Simple Push environment from the Petting Zoo library
\cite{PettingZoo}.  For each environment, we consider memoryful RNN
policies of varying complexity. We use \(R_{i}\) and \(R_{2\times i}\)
to denote neural policies with one and two recurrent layers of size
\(i\), respectively. These recurrent layers are followed by single
linear layers matching the agents' action spaces. We denote with
\(S_{i}\) and \(S_{2\times i}\) the corresponding \sysname{}.

For \sysname{} with discrete action spaces, the policies are trained
using recurrent Q-learning (DQN) \cite{Mnih+13,HausknechtS15}, while
for continuous action spaces, we use Proximal Policy Optimisation
(PPO) \cite{Schulman+17}.  The environments' transition functions are
non-linear and involve trigonometric functions; hence, we use PWL
approximation of these \cite{AkitundeLMP18,DAmbrosioLM10} to make them
linearly definable.  In all experiments, our implementation first
attempts to verify using BPT; if unsuccessful, it uses RMILP with
adaptive splitting. We have set a timeout of 10,000 seconds. In all
tables, ``\(-\)'' signifies timeout.

%
%

%
\subsection{Cart Pole (Verifying Always)}
In the Cart Pole environment \cite{BartoSA83}, a pole is attached by
an unactuated joint to a cart that is moving on a frictionless
track. The pole is placed upright on the cart, and the goal is to
balance the pole by applying forces to the cart from the left and
right. At each time step, the agent has access to the cart's position
\(x \in [-4.8, 4.8]\) and the pole's angle
\(\theta \in [\nicefrac{-2\pi}{15}, \nicefrac{2\pi}{15}]\) and can
choose between pushing the cart to the left or right. We want to verify that
the pole always remains balanced; thus, we consider unbounded
specifications
\(\phi_{\epsilon} = \Always ((\theta \leq \epsilon) \wedge (|x| \leq
1))\), for given angle bounds $\epsilon$, when starting from the
centre of the plane \(x = 0\) while the pole is vertical, i.e.,
\(\theta = \nicefrac{\pi}{2}\). We trained three policies
\(R_{16}, R_{32}\), and \(R_{2\times 16}\), and used \cref{alg:
  Verifying Always} to check whether their corresponding environments
\(S_{16}, S_{32}\), and \(S_{2 \times 16}\) satisfy
\(\phi_{\epsilon}\) for different \(\epsilon\). As reported in
\cref{tbl: Cart Pole Always}, the \(R_{2 \times 16}\)
policy is guaranteed to keep the pole balanced indefinitely, and
instances are easily verified, while \(R_{16}\) does not satisfy any
of the constraints and the \(S_{16}\) environment is easily
falsified. On the other hand, verifying the \(S_{32}\) environment has
proven to be more challenging.

\begin{table}[t]
  \centering
  \caption{Runtimes (seconds) for \cref{alg: Verifying
      Always}, with a maximum search depth of \(9\), to verify
    \(S \models \phi_\epsilon\) for different \(\epsilon\). Green,
    white, and red cells, denote that \cref{alg: Verifying
      Always} returns \True, \Unknown, and \False, respectively. All
    specifications are verified using RMILP. Superscripts \dag{} and
    \ddag{} indicate that minimal and maximal invariants are found,
    respectively. Subscripts denote the search depth where
    \cref{alg: Verifying Always} has terminated.}
  \label{tbl: Cart Pole Always}
  \vspace{-0.3em}
  \begin{tabular}{rrrrr}
     & $\epsilon=\pi/40$          & $\epsilon=\pi/30$           & $\epsilon=\pi/20$           & $\epsilon=\pi/10$\\
    \midrule
    $S_{16} \models \phi_{\epsilon}$        & \celluns $2.7_2$         & \celluns $2.3_1$         & \celluns $2.0_1$         & \celluns $2.3_1$\\
    $S_{32} \models \phi_{\epsilon}$        & \cellsat $41.3_3^{\dag}$ & $-_9$                    & $7571.2_8$               & $-_9$\\
    $S_{2\times16} \models \phi_{\epsilon}$ & \cellsat $9.8_1^{\ddag}$ & \cellsat $5.3_1^{\ddag}$ & \cellsat $9.3_1^{\ddag}$ & \cellsat $11.7_2^{\dag}$\\
  \end{tabular}
  \vspace{-0.3em}
\end{table}
%


%
\subsection{Pendulum (Verifying Eventually)}
A pendulum is attached to a fixed joint at one
end, while the other end is free. The pendulum starts in an uncertain
position, and the goal is to swing it
to an upright vertical position by applying torque on the free end. At each step, the agent has
access to the free end's horizontal and vertical positions \((x, y)\)
and can apply a torque \(a \in [-2, 2]\).

\begin{table}[t]
  \centering
  \caption{Runtimes (seconds) of \cref{alg: Verifying
      Finally} (\False{} and \Unknown{} instances) and BPMC (\True{}
    instances), with a maximum search depth of \(20\), for verifying
    \(S \models \psi_\epsilon\) for different \(\epsilon\). Subscripts
    denote the time steps where violations are found in \False{}
    instances and the search depths \(k\) of \cref{alg:
      Verifying Finally} in other instances. The colour code is as in
    \cref{tbl: Cart Pole Always}. All instances are verified by
    first using BPT and then RMILP; however, BPT alone could not verify
    any of the instances. For \Unknown{} instances, the table shows
    \cref{alg: Verifying Finally}'s time. Colour codes are the same as \cref{tbl: Cart Pole Always}.}
  \label{tbl: Pendulum Finally}
  \vspace*{-0.3em}
  \begin{tabular}{rrrrr}
    & $\epsilon=0.15$          & $\epsilon=0.10$           & $\epsilon=0.05$           & $\epsilon=0.01$\\
    \midrule
    $S_{10} \models \psi_{\epsilon}$        & \celluns $60_{15}$   & \celluns $32_{13}$   & \celluns $86_{19}$ &  $183_{20}$\\
    $S_{20} \models \psi_{\epsilon}$        & \cellsat $2156_{19}$ & $6903_{20}$          &  $5314_{20}$       &  $6012_{20}$\\
    $S_{2\times10} \models \psi_{\epsilon}$ & \cellsat $8626_{18}$ & $-_{20}$             &  $-_{20}$          &  $-_{20}$\\
  \end{tabular}
\end{table}

We verify that, for all possible system evolutions, the pole eventually reaches an upright position
\(y \geq 1 - \epsilon\) for different values of
\(\epsilon \in \set{0.15, 0.1, 0.05, 0.01}\), expressed by the
unbounded eventually property
\(\psi_{\epsilon} = \Finally (y \geq 1 - \epsilon)\). Again, we
consider three RNNs with varying numbers of layers and nodes. As
summarised in \cref{tbl: Pendulum Finally}, we see that
\(R_{10}\) has the worst performance among the three policies.

\subsection{Lunar Lander (Bounded Verification)}
\label{subsec: Lunar Lander}
The Lunar Lander environment \cite{LunarLander} consists of an agent
whose goal is to land on a landing pad between two
flags in a 2D representation of the moon
The environment is
non-deterministic due to wind.  The action space
\(\sA = \set{\circlearrowright, \uparrow, \circlearrowleft, \circ}\)
is discrete, consisting of ``fire right orientation engine'', ``fire
main engine'', ``fire left orientation engine'', and ``do
nothing''. The observation space \(\sO \subset \RR^5\) consists of the
horizontal and vertical coordinates of the lander \(x, y \in [0, 1]\),
its angle \(\theta \in [-\pi, \pi]\),
and two Boolean variables \(l_1, l_2 \in \set{0, 1}\) that indicate
if each leg is in contact with the ground.
The agent is rewarded for landing on the landing pad and is penalised
for crashing, moving out of the screen, and using the engines.

\begin{table}[b!]
  \centering
  \caption{Runtimes (seconds) for
    \(S_{2 \times 32} \models \varphi_{t, \epsilon} = \Next^t(|\theta|
    < \epsilon)\) for different values of \(t\) and \(\epsilon\) using
    the approach of (a) Akintunde et al., (b) Hosseini and Lomuscio,
    and (c) ours. Numbers indicate runtime (seconds). The white cells
    indicate that \(S \models \varphi_{t, \epsilon}\), whilst the
    red cells indicate that \(S \not\models \phi_{t,
      \epsilon}\). Instances marked by * are verified using BPT
    alone.}
  \label{tbl: Evaluating Bounded}
  \vspace{-0.3em}
  \hspace*{-0.95em}
  \resizebox{1.04\linewidth}{!}{
    \begin{tabular}{crrrrrrrrrrr}
      $t$ & \(\pi/30\) & \(\pi/20\) & \(\pi/10\) &\(\pi/30\) & \(\pi/20\) & \(\pi/10\) & \(\pi/30\) & \(\pi/20\) & \(\pi/10\) \\
      \cmidrule(rl){2-4} \cmidrule(rl){5-7} \cmidrule(rl){8-10}
      1  & 23.1   & 22.4   & 25.0   & 5.8    & 6.7             &  6.1   & 0.0\rlap{$^*$} & 0.0\rlap{$^*$} & 0.1\rlap{$^*$}\\
      2  & 195.3  & 210.5  & 196.8  & 12.5   & 13.5            & 19.6   & 0.1\rlap{$^*$} & 0.1\rlap{$^*$} & 0.1\rlap{$^*$}\\
      3  & 413.8  & 437.1  & 440.4  & 25.6   & 27.1            & 28.9   & 0.2\rlap{$^*$} & 14.1           & 14.5          \\
      4  & 1728.7 & 1724.6 & 1829.4 & 49.1   & 55.0            & 59.6   & 33.3           & 37.4           & 38.7          \\
      5  & 8648.9 & 7972.0 & 8295.2 & 77.7   & 80.4            & 83.5   & 50.4           & 49.6           & 55.5          \\
      6  & $-$    & $-$    & $-$    & 92.9   & 99.7            & 113.5  & 73.1           & 72.3           & 80.0          \\
      7  & $-$    & $-$    & $-$    & 147.6  & 159.9           & 185.8  & 100.7          & 99.2           & 107.8         \\
      8  & $-$    & $-$    & $-$    & 335.9  & \celluns 393.9  & 340.2  & 193.4          & \celluns 287.4 & 211.0         \\
      9  & $-$    & $-$    & $-$    & 776.0  & \celluns 946.8  & 767.2  & 234.8          & \celluns 475.7 & 289.6         \\
      10 & $-$    & $-$    & $-$    & 1392.0 & \celluns 2241.2 & 1783.5 & 396.2          & \celluns 553.3 & 453.0         \\
      \\
      & \multicolumn{3}{c}{\small (a) RNSVerify \cite{Akintunde+19}}
      & \multicolumn{3}{c}{\small (b) RMILP \cite{HosseiniL23}}
      & \multicolumn{3}{c}{\small (c) Ours}
    \end{tabular}
  }
\end{table}

Here, we trained an RNN policy \(R_{2 \times 32}\).
To compare our approach against the existing methods, we verified
\(S_{2 \times 32} \models \phi\)
using our proposed approach (RMILP combined with BPT and adaptive splitting)
as well as pure MILP-based approaches
of \cite{Akintunde+19} and \cite{HosseiniL23}. In fact, our implementation of \cite{HosseiniL23} is the same as pure RMILP, and thus, this also serves as an ablation study on how BP and adaptive splitting speed up the verification.

As we see in
\cref{tbl: Evaluating Bounded}, BPT allows us to verify some instances
in the early time steps almost instantly (the first rows of \cref{tbl: Evaluating Bounded}c); however, it fails to provide a definite outcome for larger
time bounds. Adaptive splitting significantly speeds
up the verification and in
larger time steps provides a speed up of around an order of magnitude
compared to the pure RMILP, which is also the
SoA \cite{HosseiniL23} runtime.
\subsection{Simple Spread (Multi-Agent)}
The Simple Spread environment \cite{PettingZoo} consists of \(n\)
agents and \(n\) landmarks. The goal is for the agents to cover all
landmarks while avoiding collisions. All agents have the same
observation space, \(\sO \subseteq \RR^{4n}\), consisting of all
agents and landmarks' 2D positions, and action space
\(\sA = \set{\leftarrow, \uparrow, \rightarrow, \downarrow, \circ}\),
consisting of ``move left'', ``move up'', ``move right'', ``move
down'', or ``do nothing'' actions.

We consider two environments with different numbers of agents
\(n = 2, 4\) and two composite (bounded) \LTL{} formulas: we verify whether the first agent, \(p_1\),
eventually reaches and remains in one of the landmarks (a stability property), and whether it always eventually reaches a landmark (a liveness property); i.e., we verify
\begin{equation*}
  \begin{split}
    S_n & \models \xi_t = \Always^{\leq t} \Finally^{\leq t}
          (\bigvee_{i=1}^n \norm{p_1 - \ell_i}_1 \leq 0.1),\\
    S_n & \models \chi_t = \Finally^{\leq t} \Always^{\leq t}
          (\bigvee_{i=1}^n \norm{p_1 - \ell_i}_1 \leq 0.1),
  \end{split}
\end{equation*}
where \(S_n\) is the \sysname{} with \(n\) agents and \(t=0,1,\ldots,9\).
The agents' initial states are uncertain and can take values in
\([-1, 1]^2\). Agents are controlled by \(R_{2 \times 8}\) policies. The
landmarks in \(S_2\) and \(S_4\) are at \((\pm\nicefrac{1}{2}, 0)\)
and \((\pm\nicefrac{1}{2}, \pm\nicefrac{1}{2})\). As
summarised in \cref{fig: Multi-Agent}, the
verification time grows almost exponentially in the number of time
steps \(t\), and except \(S_n \models \chi_t\), all other
specifications are unsatisfiable.

\begin{figure}[h!]
  \centering
  \includegraphics[width=.85\linewidth]{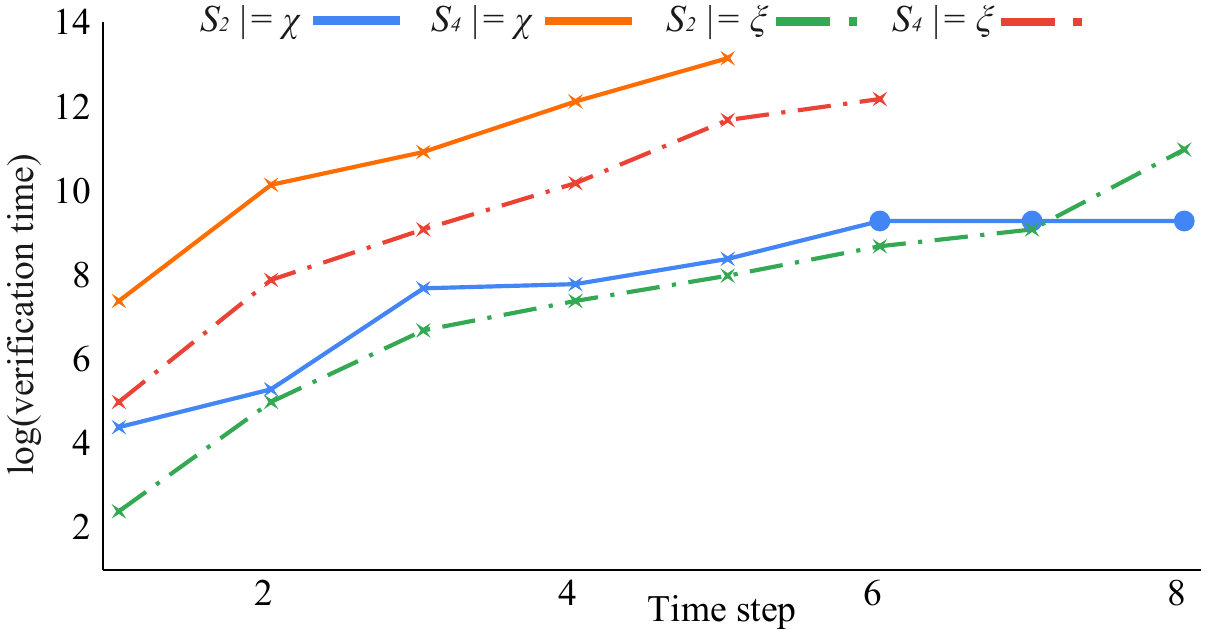}
  \vspace{-0.3em}
  \Description{Graphs of (logarithmic) verification time as the number
    of steps increase.}
  \caption{The base-2 logarithm of runtimes (seconds) for verifying
    \(\xi_t\) and \(\chi_t\) using BPMC. All instances are verified
    using RMILP. Symbols \({\color{green!80!blue!70!black} \bullet}\)
    and \xmark{} denote that the algorithm returned \True{} and
    \False, respectively.}
  \label{fig: Multi-Agent}
\end{figure}
\vspace{-1em}


%
\section*{Conclusions}
\label{sec: Conclusions}
We introduced the first framework for verifying full LTL specifications
of multi-agent systems with memoryful neural agents under
uncertainty by adapting well-established verification techniques.
We evaluated the introduced algorithms on
various widely used deep RL environments with single and multiple
agents and verified bounded and unbounded \LTL{} specifications. Compared to
the SoA that consider fragments of \LTL, we improved the verification time by an order of magnitude. This work demonstrates how foundational verification techniques can be used for verifying the safety of \sysname, thus paving the way for the adoption of other advanced verification methods for the safety verification of MAS.


%
\begin{acks}
This work was supported by supported by ``REXASI-PRO'' H-EU
project, call HORIZON-CL4-2021-HUMAN01-01, Grant agreement ID:
101070028 and by the Engineering and Physical Sciences Research Council
(EPSRC) under Award EP/W014785/2. Alessio Lomuscio acknowledges support from the Royal Academy of Engineering via a Chair of Emerging Technologies.


\end{acks}

\bibliographystyle{ACM-Reference-Format} 
\balance
\bibliography{bibliography,sample}

\newpage
\begin{appendices}
\section{The Undecidability Proof}
\label{sec: Undecidability Proof}
In this section, we prove that the \LTL verification of \sysname{} is undecidable in general.

\begin{proof}[Proof of Theorem~\ref{thm: Undecidability}]
  We utilise the results of \cite{SiegelmannS95,Chen+18}, which show
  RNNs are universal Turing machines. Thus, to show that \LTL{}
  verification of \sysname{} is undecidable, it suffices to consider a
  fully-observable system \(S\) with a single agent controlled by an
  RNN policy \(R: (\RR^n)^* \to \RR^n\) alongside observation and
  actions spaces \(\sO = \sA = \RR^n\), identity observation function
  \(o : \vx \mapsto \vx\), and transition function
  \(\tau: (\vx, \va) \mapsto \va\). Now, by \cite[Theorem 8]{Chen+18},
  deciding whether \(S \models \Finally (a_1 > c)\) for a given
  \(0 < c < 1\) is undecidable.\footnote{We acknowledge that a similar
    undecidability result for memoryless neural MAS is provided in
    \cite[Theorem 1]{Akintunde+22}; however, its construction of
    step function \(\floor{\cdot}\) using \(\ReLU\) is incorrect
    as it is impossible to construct a discontinuous function
    from continuous functions using finitely many compositions,
    multiplications, and \(\pm\) operations.}
\end{proof}
\section{Future Research}
This paper aims to introduce the wealth of tools available in formal methods literature to the MAS domain. We have adapted several well-studied verification techniques to verify MAS safety with respect to \LTL{} specifications. Similar extensions of other logics, such as LTL[F] and LTL with discounting \cite{AlmagorBK16-Quality} or Alternating-Time Temporal Logic (ATL) and ATL* \cite{AlurHK02-ATL}, may be also considered. Similarly, there are numerous additional tools developed by the verification community that can be adapted for verifying MAS safety. Perhaps the most notable among these is IC3 \cite{Bradley11}. Other examples of literature that could potentially be exploited for verifying MAS safety, particularly to the problems considered here, are \cite{BradleyM08,Dillig+13,Padon+22}.


\end{appendices}


\end{document}